\documentclass[11pt,a4paper,final]{amsart}

\setlength{\oddsidemargin}{0cm}
\setlength{\evensidemargin}{0cm}
\setlength{\textheight}{21.7cm}
\setlength{\textwidth}{15cm}

\usepackage{amsmath}
\usepackage{paralist}
\usepackage{graphics}
\usepackage{graphicx}
\usepackage{epstopdf}
\usepackage[colorlinks=true]{hyperref}
\usepackage{amssymb}
\usepackage{amsthm}
\usepackage{color}
\usepackage{multirow}
\usepackage{booktabs}
\usepackage{bm}
\usepackage{mathrsfs}

\newtheorem{theorem}{theorem}[section]
\newtheorem{corollary}{corollary}
\newtheorem{lemma}[theorem]{lemma}
\newtheorem{proposition}{proposition}

\newtheorem{definition}[theorem]{definition}

\newcommand{\ud}{\mathrm{d}}
\newcommand{\R}{\mathbb{R}}

\begin{document}

\title[Multidimensional linear and nonlinear PIDE]{Multidimensional linear and nonlinear partial integro-differential equation in Bessel potential spaces with applications in option pricing}

\author{Daniel \v{S}ev\v{c}ovi\v{c}${}^{1}$}
\author{Cyril Izuchukwu Udeani${}^{1}$}
\address{${}^{1}$ Department of Applied Mathematics and Statistics, Faculty of Mathematics Physics and Informatics, Comenius University, Mlynsk\'a dolina, 842 48, Bratislava, Slovakia. Corresponding author: {\tt sevcovic@fmph.uniba.sk, cyril.udeani@fmph.uniba.sk} }

\begin{abstract}
The purpose of this paper is to analyze solutions of a non-local nonlinear partial integro-differential equation (PIDE) in multidimensional spaces. Such class of PIDE often arises in financial modeling. We employ the theory of abstract semilinear parabolic equations in order to prove existence and uniqueness of  solutions in the scale of Bessel potential spaces. We consider a wide class of  L\'evy measures satisfying suitable growth conditions near the origin and infinity. The novelty of the paper is the generalization of already known results in the one space dimension to the multidimensional case. We consider Black-Scholes models for option pricing on underlying assets following a L\'evy stochastic process with jumps. As an application to option pricing in the one-dimensional space, we consider a general shift function arising from nonlinear option pricing models taking into account a large trader stock-trading strategy. We prove existence and uniqueness of a solution to the nonlinear PIDE in which the shift function may depend on a prescribed large investor stock-trading strategy function.

\medskip
\noindent
2010 MSC. Primary: 45K05 · 35K58 · 34G20 · 91G20

\noindent Key words and phrases. L\'evy measure; Option pricing; Strong kernel; H\"older continuity; Partial integro-differential equation; Bessel potential spaces; 

\end{abstract}

\maketitle

\section{Introduction}

Our goal is to prove existence and uniqueness of a solution to a nonlinear parabolic partial integro-differential equation (PIDE) having the form: 
\begin{eqnarray}
\frac{\partial u}{\partial \tau} &=& \Delta u 
+ \int_{\mathbb{R}^n}\left[ u(\tau, x+z)-u(\tau, x)- z\cdot \nabla_x u(\tau,x)
\right] \nu(\ud z) +  g(\tau, x, u, \nabla_x u), 
\label{PDE-u}
\\
&&u(0,x)=u_0(x), \quad x\in \R^n, \tau\in(0,T),
\nonumber
\end{eqnarray}
where $g$ is a given sufficiently smooth function. By $\nu $, we denote a positive measure on $\mathbb{R}^n$ such that its Radon derivative is a non-negative Lebesgue measurable function $h$ in $\R^n$, i.e., $\nu (\ud z) = h(z) \ud z$.

Recently, the theory of stochastic processes have attracted significant attention owing to their important mathematical development and applications. Stochastic processes are mathematically rich objects with wide range of applications, e.g., in engineering, physics, and economics. It is worth noting that these processes occur in almost every discipline. Various stochastic processes, such as random walks, Markov chains, martingales, and L\'evy processes have attracted significant attention due to their wide range of applications. In particular, the concept of L\'evy measures plays a significant role owing to its theoretical developments and wealth of novel applications, e.g., to option pricing in financial mathematics. 
In the probability theory, the Laplace operator occurs as an infinitesimal generator of a stochastic process since it generates a semigroup corresponding to a symmetric stable process. The relationships between more general non-local operators and jump processes has been widely investigated. In particular, there is a relationship between the solution to PIDE and properties of the corresponding Markov jump process (c.f.,  Abels and Kassmann \cite{AbelsKass2009}, or Florescu and Mariani \cite{florescu2010solutions}).

In the past decades, the role of PIDEs has been investigated in various fields, such as pure mathematical areas, biological sciences, and economics \cite{aboodh2016solution,florescu2010solutions,NBS15,yuzbacsi2016improved}. PIDE problems arising from financial mathematics, especially from option pricing models, have been of great interest to many researchers. In most cases, standard methods for solving these problems lead to the study of parabolic equations. Mikulevi\v{c}ius and Pragaraustas \cite{mikulevivcius1992cauchy} investigated solutions of the Cauchy problem to the parabolic PIDE with variable coefficients in Sobolev spaces. The results were applied in order to obtain solutions of the corresponding martingale problem. Crandal \emph{et al.} \cite{ishii1996viscosity} employed the notion of a viscosity solution to investigate the qualitative results. Their results were extended and generalized by Soner \emph{et al.} \cite{burzoni2020viscosity} and Barles \emph{et al.} \cite{barles1997backward} for the first and second order operators, respectively. Florescu and Mariani \cite{florescu2010solutions} utilized the Schaefer fixed point theorem to establish existence of a weak solution of the generalized PIDE. Amster \emph{et al.} \cite{Amster12} used the method of upper and lower solutions. They proved the existence of solutions in a general domain for multiple assets and the  regime switching jump diffusion model. Cont \emph{et al.} \cite{cont2005integro} investigated the actual connection between option pricing in exponential L\'evy models and the corresponding PIDEs for European options and those with single or double barriers. They discussed the conditions for which prices of option are classical solution of the PIDE. Cruz and \v{S}ev\v{c}ovi\v{c} \cite{CruzSevcovic2020} employed the theory of abstract semilinear parabolic equations to obtain a solution to the nonlinear non-local PIDE (\ref{PDE-u}) in the one-dimensional case. They employed the framework of a scale of Bessel potential spaces for general L\'evy measures by considering classical Black-Scholes model that depends on various restricted assumptions, e.g., liquidity and completeness of the market. They relaxed the complete market hypothesis and assumed a L\'evy process for underlying stock price to obtain a model for pricing European and American call and put options on an underlying asset characterized by a L\'evy measure.

In this paper, we extend and generalize the results of Cruz and \v{S}ev\v{c}ovi\v{c} to a multidimensional space in the scale of Bessel potential Sobolev spaces. We shall also analyze the following generalization of PIDE:
\begin{equation}
\frac{\partial u}{\partial \tau} = \frac{\sigma^2}{2} \Delta u 
+ \int_{\mathbb{R}^n}\left[ u(\tau, x+ \xi )-u(\tau, x)- \xi \cdot \nabla_x u(\tau,x)
\right] \nu(\ud z) +  g(\tau, x, u, \nabla_x u) ,
\label{PDE-u-general}
\end{equation}
where $\xi=\xi(\tau, x, z)$ is a shift function, which may depend on the variables $\tau>0, x,z\in\R$. An application for such a general shift function $\xi$ can be found in nonlinear option pricing models considering a large trader stock-trading strategy with the underlying asset price dynamic following the L\'evy process (c.f.,  Cruz and \v{S}ev\v{c}ovi\v{c} \cite{NBS19}). If $\xi(x, z)\equiv z$, then (\ref{PDE-u-general}) reduces to equation (\ref{PDE-u}). The nonlinearity $g$ often arises from applications occurring  in pricing e.g., XVA derivatives (c.f.,  Arregui \emph{et al.} \cite{NBS15, NBS17}) or applications of the penalty method for American option pricing under a PIDE model (c.f.,   Cruz and \v{S}ev\v{c}ovi\v{c} \cite{CruzSevcovic2020}).

Our aim is to investigate solutions to PIDE (\ref{PDE-u}) in the framework of Bessel potential spaces for a multidimensional case, $n\ge 1$. They form a nested scale $\{X^\gamma\}_{\gamma\ge0}$ of Banach spaces satisfying $X^{\gamma_1} \hookrightarrow X^{\gamma_2}$ for any $1\geq \gamma_1\geq \gamma_2\geq 0$, and $X^1\equiv D(A), X^0\equiv X$. Here, the operator $A$ is sectorial in the space $X$ having a dense domain $D(A)\subset X$ (c.f.,  Henry \cite{Henry1981}). An example of such an operator is the Laplace operator, i.e., $A=-\Delta$ in $\R^n$ with the domain $D(A)\equiv W^{2,p}(\R^n)\subset X\equiv L^p(\R^n)$. It is known that if $A=-\Delta$, then $X^\gamma$ is embedded in the Sobolev-Slobodecki space $W^{2\gamma,p}(\R^n)$, which is a space consisting of all functions  such that $2\gamma$-fractional derivative belongs to the Lebesgue space $L^p(\R^n)$ of $p$-integrable functions (c.f.,  \cite{Henry1981}). We investigate solutions to the PIDE (\ref{PDE-u-general})  for a wide class of L\'evy measures $\nu$  satisfying suitable growth conditions near $\pm\infty$ and origin in a higher dimensional space. 

The remainder of the paper is organized as follows. In Section 2, we present preliminaries and basic notions of L\'evy measures, and introduce concept of admissible activity L\'evy measures  with some important examples. In Section 3, we establish the existence results to the nonlinear non-local PIDE (\ref{PDE-u-general}) in the framework of Bessel potential spaces for the multidimensional case. We analyze the generalized nonlinear non-local PIDE (\ref{PDE-u-general}), where the shift function $\xi=\xi(\tau, x, z)$ depends on the variables $x,z\in\R^n$. Section 4 presents an application of the proposed results in the one-dimensional space for pricing of options on underlying asset that follows L\'evy processes. Section 5 presents the conclusions. 

\section{Preliminaries, definitions and motivation }
In this section, we presents some basic definitions of L\'evy measures, and notion of admissible activity L\'evy measures. In what follows, we denote the Euclidean norm in $\R^n$ and the norm in an infinite dimensional function space (e.g., $L^p(\R^n), X^\gamma$) by $|\cdot|$, and $\Vert\cdot\Vert$, respectively. Henceforth, $a\cdot b$ stands for the usual Euclidean product in $\R^n$ with the norm $|z| = \sqrt{z\cdot z}$.

A L\'evy process on $\R^n$ is a stochastic (right continuous) process $X = \{X_t, t\geq 0\}$ having the left limit with independent stationary increments. It is uniquely characterized by its L\'evy exponent $ \phi $:
\[
\mathbb{E}_x(e^{i y\cdot X_t}) = e^{-t\phi(y)}, ~ y\in\R^n.
\]
The subscript $x$ in the expectation operator $\mathbb{E}_x$ indicates that the process $X_t$ starts from a given value $x$ at the origin $t=0$. The L\'evy exponent $\phi$ has the following unique decomposition: 
\[
\phi(y) = ib\cdot y + \sum_{i, j=1}^n a_{ij} y_i y_j + \int_{\R^n}\left(1-e^{iy\cdot z} + iy\cdot  z1_{|z|\leq 1}\right)\nu(\ud z), 
\]
where $b\in\R^n$ is a constant vector, $(a_{ij})$ is a constant matrix, which is positive semidefinite; $\nu(\ud z)$ is a nonnegative measure on $\R^n \setminus \{0\}$ such that $\int_{\R^n}\min(1, |z|^2)\nu(\ud z) <\infty$ \\
(c.f.,  \cite{palatucci2017recent}).
\subsection{Admissible activity L\'evy measures}

Next, we recall notion of an admissible activity L\'evy measure introduced by Cruz and \v{S}ev\v{c}ovi\v{c} \cite{NBS19, CruzSevcovic2020} for the one-dimensional case $n=1$. We shall extend this notion  for the multidimensional case $n\ge 1$.

\begin{definition}\label{def-admissiblemeasure}
A measure $\nu$ in $\R^n$ is called an admissible activity L\'evy measure if  there exists a non-negative Lebesgue measurable function $h:\R^n\to \R$ such that $\nu(\ud z) = h(z) \ud z$ such that 
\begin{equation}
0 \le  h(z)\le C_0 | z|^{-\alpha} e^{- D |z| - \mu |z|^{2}},
\label{growth_measure}
\end{equation}
for all $z\in\R^n$ and the shape parameters $\alpha, \mu\geq 0, D\in \R$, ($D>0$ if $\mu=0)$. Here, $C_0>0$ is a positive constant. 
\end{definition}

It is worth noting that the class of admissible activity L\'{e}vy measures consists of various important measures often used in financial modeling of underlying stock dynamics with jumps. For instance, in the context of financial modeling, one of first jump-diffusion models was proposed by Merton \cite{Merton76}. Its L\'{e}vy measure has the form:
\begin{equation}
\nu(\ud z)=\lambda \frac{1}{(2\pi\delta^2)^{n/2}}e^{-\frac{|z-m|^2}{2\delta^2}}\ud z\,.
\label{merton-density}
\end{equation}
Here, the parameters $m\in\R^n, \lambda, \delta>0,$ are given. Merton's measure is a finite activity L\'evy measure, i.e., $\nu(\mathbb{R}^n)=\int_{\mathbb{R}^n}\nu(\ud z)<\infty$, having finite variation $\int_{|z|\leq 1}|z|\nu(\ud z)<\infty$. 

A class of admissible L\'evy measures also contains measures of the form:
\begin{equation}
\nu (\ud z)= C_0 |z|^{-\alpha}e^{-A|z|}\ud z, 
\label{vargamma-density}
\end{equation} 
where $\alpha\in\R, A> 0$. Such a L\'evy  measure is of finite activity provided that $\alpha < n$, it has a finite variation if  $\alpha < n+1$. 

In the one-dimensional case $n=1$, the class of admissible L\'evy measures includes the double exponential model introduced by Kou \cite{Kou2002}, Variance Gamma model \cite{DPE98}, Normal Inverse Gaussian (NIG), and CGMY models \cite{BARNIE01}. We refer the reader to the paper \cite{CruzSevcovic2020} by Cruz and \v{S}ev\v{c}ovi\v{c}  for an  overview of various examples of admissible L\'evy measures in the one-dimensional case.

\section{Existence and uniqueness results}

In this section, we focus on the proof of the existence and uniqueness of a solution $u(\tau,x)$ to (\ref{PDE-u-general}) for a wide class of L\'evy measures. The PIDE (\ref{PDE-u-general}) can be rewritten as follows:
\begin{eqnarray}
&&\frac{\partial u}{\partial \tau} + A u = 
 f(u) + g(\tau, x, u, \nabla_x u), \quad u(0,x)=u_0(x), \quad x\in \mathbb{R}^n, \tau\in(0,T),
\label{problem_transformed}
\end{eqnarray}
where  $A = -(\sigma^2/2) \Delta $. The linear non-local operator $f$ is defined by:
\begin{equation}
f(u)(\cdot) =
\int_{\mathbb{R}^n}\left[ u(\cdot+\xi)-u(\cdot)- \xi \cdot \nabla_x u(\cdot)\, \right] \nu(\ud z),
\label{functional_f_def}
\end{equation}
where $\xi = \xi(\tau,x,z)$ is a given shift function. The mapping $g$ is assumed to be H\"older continuous in the $\tau$ variable and Lipschitz continuous in the remaining  variables.  The proof of the existence, continuation, and uniqueness of a solution is based on the abstract theory of semilinear parabolic equations according to Henry \cite{Henry1981}. A solution to the PIDE (\ref{problem_transformed}) is constructed in the scale of the Bessel potential spaces ${\mathscr L}^p_{2\gamma}(\mathbb{R}^n),\gamma\ge 0$, for the multidimensional case, $n\geq 1$. These spaces can be considered as a natural extension of the classical Sobolev spaces $W^{k,p}(\mathbb{R}^n)$ for non-integer values of order $k$. The nested scale of Bessel potential spaces allows for a finer formulation of existence and uniqueness results compared to the classical Sobolev spaces.

In what follows, we briefly recall the construction and basic properties of Bessel potential spaces. Recall that if $A$ is a sectorial operator in a Banach space $X$, then $-A$ is a generator of an analytic semigroup $\left\{e^{-A t}, t\geq 0\right\}$ acting on $X$ (c.f.,  \cite[Chapter I]{Henry1981}). For any $\gamma >0$, we can introduce the operator $A^{-\gamma}:X\to X$ as follows: $A^{-\gamma}=\frac{1}{\Gamma(\gamma)}\int_{0}^{\infty} \xi^{\gamma-1}e^{-A \xi} \ud \xi$. Then, the fractional power space $X^{\gamma} = D(A^{\gamma})$ is the domain of the operator $A^\gamma= (A^{-\gamma})^{-1}$. That is, $X^{\gamma}=\left\{u\in X:\  \exists \varphi\in X,  u=A^{-\gamma}\varphi\right\}$. The norm is defined as follows: $\Vert u\Vert_{X^\gamma}=\Vert A^{\gamma}u\Vert_X=\Vert \varphi\Vert_X$. Furthermore, we have continuous embedding: $D(A)\equiv X^1 \hookrightarrow X^{\gamma_1} \hookrightarrow X^{\gamma_2} \hookrightarrow X^0\equiv X$, for $0\le \gamma_2\le \gamma_1\le 1$. 

Let us recall the convolution operator $(G*\varphi)(x)=\int_{\mathbb{R}^n} G(x-y)\varphi(y)\ud{y}$.  According to \cite[Section 1.6]{Henry1981}, $ A= -(\sigma^2/2) \Delta$ is a sectorial operator in the Lebesgue space $X=L^{p}(\mathbb{R}^{n})$ for any $p\ge 1, n\ge 1$, and $D(A) \subset W^{2,p}(\mathbb{R}^{n})$. Now, it follows from  \cite[Chapter 5]{Stein1970} that the space $X^\gamma, \gamma>0,$ can be identified with the Bessel potential space ${\mathscr L}^p_{2\gamma}(\mathbb{R}^n)$, where 
\[
{\mathscr L}^p_{2\gamma}(\mathbb{R}^n):=\{u\in X:\  \exists \varphi\in X, u=G_{2\gamma}*\varphi\}.
\]
Here, $G_{2\gamma}$ is the Bessel potential function, 
\[
G_{2\gamma}(x) = \frac{1}{(4\pi)^{n/2}\Gamma(\gamma)} \int_0^\infty  y^{-1+\gamma-n/2} e^{-(y +|x|^2/(4y))}\ud y .
\]
The norm of $u=G_{2\gamma}*\varphi$ is given by $\Vert u\Vert_{X^\gamma}=\Vert \varphi\Vert_{L^p}$. The space $X^\gamma$ is continuously embedded in the fractional  Sobolev-Slobodeckii space $W^{2\gamma,p}(\R^n)$ (c.f.,  \cite[Section 1.6]{Henry1981}).

By $C_0>0$, we shall denote a generic constant which is independent of the solution $u$; however, it may depend on model parameters, e.g., $n\ge 1, p\ge 1, \gamma\in[0,1)$.

\begin{proposition}\label{prop_pointwise_est-2}
Let us define the mapping $Q(u,\xi)$ as follows:
\[
Q(u,\xi) = u(x +\xi(x))-\xi(x)\cdot\nabla_x u(x), \quad x\in\R^n.
\]
Then, there exists a constant $\hat{C}>0$ such that, for any vector valued functions  $\xi_1, \xi_2\in (L^\infty(\R^n))^n$, and $u$ such that $\nabla_x u \in (X^{\gamma-1/2})^n$, $1/2\le \gamma<1$, the following estimate holds:
\[
\Vert Q(u,\xi_1) - Q(u,\xi_2)  \Vert_{L^p({\R^n})}
\leq \hat{C} \Vert\xi_1-\xi_2\Vert_\infty^{2\gamma-1} (\Vert\xi_1\Vert_\infty+\Vert\xi_2\Vert_\infty) \Vert\nabla_x u\Vert_{X^{\gamma-1/2}}.
\]
\end{proposition}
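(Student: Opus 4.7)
The plan is to reduce $Q(u,\xi_1)-Q(u,\xi_2)$ to an integral of increments of $\nabla u$ along variable shifts, invoke the $L^p$--H\"older continuity of translations enjoyed by functions in the Bessel potential space $X^{\gamma-1/2}\cong {\mathscr L}^p_{2\gamma-1}(\R^n)$, and then redistribute the H\"older exponent $2\gamma-1$ onto $\Vert\xi_1-\xi_2\Vert_\infty$ by an elementary interpolation.

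Setting $\eta_s(x)=(1-s)\xi_2(x)+s\xi_1(x)$, the fundamental theorem of calculus applied to $s\mapsto u(x+\eta_s(x))$ yields
$$u(x+\xi_1(x))-u(x+\xi_2(x))=\int_0^1\nabla u(x+\eta_s(x))\cdot(\xi_1(x)-\xi_2(x))\,ds,$$
so after subtracting $(\xi_1-\xi_2)\cdot\nabla u(x)$ one obtains the key representation
$$Q(u,\xi_1)(x)-Q(u,\xi_2)(x)=\int_0^1\left[\nabla u(x+\eta_s(x))-\nabla u(x)\right]\cdot(\xi_1(x)-\xi_2(x))\,ds.$$
Bounding $|\xi_1(x)-\xi_2(x)|\le\Vert\xi_1-\xi_2\Vert_\infty$ pointwise and applying Minkowski's integral inequality gives
$$\Vert Q(u,\xi_1)-Q(u,\xi_2)\Vert_{L^p}\le\Vert\xi_1-\xi_2\Vert_\infty\int_0^1\Vert\nabla u(\cdot+\eta_s(\cdot))-\nabla u(\cdot)\Vert_{L^p}\,ds.$$
The core analytic input is the variable-shift $L^p$--H\"older estimate
$$\Vert v(\cdot+\eta(\cdot))-v(\cdot)\Vert_{L^p}\le C\,\Vert\eta\Vert_\infty^{2\gamma-1}\Vert v\Vert_{{\mathscr L}^p_{2\gamma-1}},\qquad v\in{\mathscr L}^p_{2\gamma-1}(\R^n),\ \eta\in(L^\infty(\R^n))^n,$$
valid because $0\le 2\gamma-1<1$. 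Applied componentwise to $v=\nabla u$, together with $\Vert\eta_s\Vert_\infty\le\Vert\xi_1\Vert_\infty+\Vert\xi_2\Vert_\infty$, this produces
$$\Vert Q(u,\xi_1)-Q(u,\xi_2)\Vert_{L^p}\le C\,\Vert\xi_1-\xi_2\Vert_\infty\,(\Vert\xi_1\Vert_\infty+\Vert\xi_2\Vert_\infty)^{2\gamma-1}\Vert\nabla u\Vert_{X^{\gamma-1/2}}.$$
A one-line interpolation, based on $\Vert\xi_1-\xi_2\Vert_\infty\le\Vert\xi_1\Vert_\infty+\Vert\xi_2\Vert_\infty$ and $2-2\gamma>0$, then converts the factor $\Vert\xi_1-\xi_2\Vert_\infty\cdot(\Vert\xi_1\Vert_\infty+\Vert\xi_2\Vert_\infty)^{2\gamma-1}$ into $\Vert\xi_1-\xi_2\Vert_\infty^{2\gamma-1}(\Vert\xi_1\Vert_\infty+\Vert\xi_2\Vert_\infty)$, which matches the claimed bound.

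The main obstacle is the variable-shift $L^p$--H\"older estimate. For a \emph{constant} shift $h$ the bound $\Vert v(\cdot+h)-v(\cdot)\Vert_{L^p}\le C|h|^{2\gamma-1}\Vert v\Vert_{{\mathscr L}^p_{2\gamma-1}}$ is classical, following for instance from the Bessel kernel representation $v=G_{2\gamma-1}*\varphi$ together with $\Vert G_{2\gamma-1}(\cdot+h)-G_{2\gamma-1}(\cdot)\Vert_{L^1}\lesssim|h|^{2\gamma-1}$, or via Fourier-multiplier arguments applied to the symbol $(e^{ih\cdot\zeta}-1)(1+|\zeta|^2)^{-(2\gamma-1)/2}$. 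Handling the $x$-dependent shift $\eta_s(x)$ requires more care: the natural strategies are either a pointwise Haj\l asz-Calder\'on-type majorant $|v(x)-v(y)|\le|x-y|^{2\gamma-1}(g(x)+g(y))$ with $\Vert g\Vert_{L^p}\lesssim\Vert v\Vert_{{\mathscr L}^p_{2\gamma-1}}$, the composition $g\circ(\mathrm{id}+\eta_s)$ being then controlled by the Hardy-Littlewood maximal function of $g$, or a direct uniform-in-$x$ estimate on the kernel difference $G_{2\gamma-1}(x+\eta_s(x)-y)-G_{2\gamma-1}(x-y)$. Once this ingredient is in place, the remaining steps are routine bookkeeping.
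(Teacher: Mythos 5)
Your route is essentially the paper's: reduce the difference to increments of $\nabla_x u$ along shifts via the fundamental theorem of calculus, represent $\nabla_x u=G_{2\gamma-1}*\varphi$, and combine Young's inequality $\Vert\psi*\varphi\Vert_{L^p}\le\Vert\psi\Vert_{L^1}\Vert\varphi\Vert_{L^p}$ with Stein's modulus-of-continuity bound $\Vert G_\alpha(\cdot+h)-G_\alpha(\cdot)\Vert_{L^1}\le C_0|h|^\alpha$. Your decomposition is in fact tidier: interpolating along the single path $\eta_s=(1-s)\xi_2+s\xi_1$ gives the one-term identity for $Q(u,\xi_1)-Q(u,\xi_2)$ and leads directly to the stronger factor $\Vert\xi_1-\xi_2\Vert_\infty(\Vert\xi_1\Vert_\infty+\Vert\xi_2\Vert_\infty)^{2\gamma-1}$, whereas the paper splits into two terms (one carrying $\xi_1-\xi_2$ against the increment from $x$ to $x+\theta\xi_1(x)$, one carrying $\xi_2$ against the increment between $x+\theta\xi_1(x)$ and $x+\theta\xi_2(x)$) and then needs the elementary Young inequality $ab\le a^\alpha/\alpha+b^\beta/\beta$ to reassemble the stated form. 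Your closing interpolation, writing $\Vert\xi_1-\xi_2\Vert_\infty=\Vert\xi_1-\xi_2\Vert_\infty^{2\gamma-1}\Vert\xi_1-\xi_2\Vert_\infty^{2-2\gamma}$ and absorbing the second factor into $(\Vert\xi_1\Vert_\infty+\Vert\xi_2\Vert_\infty)^{2-2\gamma}$, is correct.

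The gap is the ingredient you explicitly defer, namely the variable-shift estimate $\Vert v(\cdot+\eta(\cdot))-v(\cdot)\Vert_{L^p}\le C\Vert\eta\Vert_\infty^{2\gamma-1}\Vert v\Vert_{\mathscr{L}^p_{2\gamma-1}}$. Two remarks. First, the paper does not prove this point any more rigorously than you do: it writes $\nabla_x u(x+\theta\xi_1(x))-\nabla_x u(x)$ as $\left(G_{2\gamma-1}(\cdot+\theta\xi_1)-G_{2\gamma-1}(\cdot)\right)*\varphi$ and applies Young's inequality as though $\theta\xi_1(x)$ were a constant vector; for an $x$-dependent shift the kernel $K(x,y)=G_{2\gamma-1}(x+\theta\xi_1(x)-y)-G_{2\gamma-1}(x-y)$ is not of convolution type, and only the row bound $\sup_x\int|K(x,y)|\,\ud y\le C\Vert\theta\xi_1\Vert_\infty^{2\gamma-1}$ comes for free from Stein's estimate, not the column bound a Schur-type argument would also require. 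So you have correctly isolated the crux, and it is the weak point of the published proof as well. Second, and more importantly, your proposed repair via a Haj{\l}asz-type majorant does not close: $g(x+\eta(x))$ is not dominated by the maximal function $Mg(x)$, and for a merely bounded measurable $\eta$ the map $x\mapsto x+\eta(x)$ can collapse a ball of positive measure onto a single point (take $\eta(x)=-x$ for $|x|\le\delta$ and $\eta=0$ elsewhere). If that point is a singularity of the canonical representative $v=G_{2\gamma-1}*\varphi$ — which can occur when $(2\gamma-1)p<n$ — then $\Vert v(\cdot+\eta(\cdot))-v(\cdot)\Vert_{L^p}=+\infty$ while $\Vert\eta\Vert_\infty=\delta$, so the lemma as you state it is actually false for general $\eta\in(L^\infty(\R^n))^n$; indeed the composition $v\circ(\mathrm{id}+\eta)$ is not even well defined independently of the a.e.\ representative. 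Any honest completion therefore needs either additional structure on the shifts (continuity, or a non-degeneracy of $x\mapsto x+\eta(x)$, which does hold for the shifts $\xi(x,z)=z+\rho e^{-z}(\psi(x+z)-\psi(x))$ arising in the application) or a reformulation that keeps the $s$-integral inside before taking $L^p$ norms. In short: your surrounding bookkeeping is correct and marginally cleaner than the paper's, but the central analytic estimate is not proved, and the maximal-function device you suggest for it would fail.
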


\begin{proof}
Let $u\in X$ be such that  $\nabla_x u\in (X^{\gamma-1/2})^n$, i.e., $\partial_{x_i}u\in X^{\gamma-1/2} ~ \text{for each} ~ i = 1, \cdots, n$. Then $\nabla_x u= A^{-(2\gamma-1)/2}\varphi  = G_{2\gamma-1} * \varphi$ for some $\varphi\in (L^p(\R^n))^n$, and $\Vert \nabla_x u\Vert_{X^{\gamma-1/2}} = \Vert A^{(2\gamma-1)/2}\nabla_x u\Vert_X = \Vert \varphi\Vert_{L^p}$. 
Here, $\varphi = (\varphi_1, \cdots, \varphi_n)$ and $ \partial_{x_i}u =  G_{2\gamma-1} * \varphi_i$. Let $x, \xi\in \R^n$. Then 
\[
\nabla_x u(x+\xi) = G_{2\gamma-1}(x+\xi - \cdot)* \varphi(\cdot), \qquad  \nabla_x u(x)
= G_{2\gamma-1}(x - \cdot)* \varphi(\cdot).
\]
Recall that the convolution operator satisfies the following inequality:
\[
\Vert \psi*\varphi\Vert_{L^p(\R^n)}\le \Vert \psi\Vert_{L^q(\R^n)} \Vert \varphi\Vert_{L^r(\R^n)},
\]
where  $p,q,r\ge 1$ and $1/p + 1 = 1/q + 1/r$ (see \cite[Section 1.6]{Henry1981}). In particular, for $q=1$ we have $\Vert \psi*\varphi\Vert_{L^p}\le \Vert \psi\Vert_{L^1} \Vert \varphi\Vert_{L^p}$. 
For the modulus of continuity of the Bessel potential function $G_{\alpha}, \alpha\in(0,1)$, the following estimate holds:
\[
\Vert G_{\alpha}(\cdot + h) - G_{\alpha}(\cdot) \Vert_{L^1}
\le C_0 |h|^{\alpha},
\]
for any $h\in \R^n$ (c.f.,  \cite[Chapter 5.4, Proposition 7]{Stein1970}).
Let $\xi_1, \xi_2$ be bounded vector valued  functions, i.e., $\xi_1, \xi_2\in (L^\infty(\R^n))^n$. Then, for any $x\in \R^n ~ \text{and} ~\theta\in [0,1]$, we have
\begin{eqnarray*}
&&
u(x+\xi_1(x)) - u(x+\xi_2(x)) - (\xi_1(x) -\xi_2(x)) \cdot\nabla_x u(x)
\\
&=& u(x+\xi_1(x)) - \nabla_x u(x) + \xi_1(x)\cdot \nabla_x u(x) 
\\
&& - [u(x+\xi_2(x))- \nabla_x u(x) -\xi_2(x)\cdot \nabla_x u(x)]
\\
&=& (\xi_1(x) -\xi_2(x))\int_0^1\nabla_x u(x+\xi_1(x)) - \nabla_x u(x)\ud{\theta}  
\\
&&
+ \int_0^1 \nabla_x u(x+\theta \xi_1(x))-  \nabla_x u(x+\theta \xi_2(x))\ud{\theta}
\ .
\end{eqnarray*}
Now,
\begin{eqnarray*}
&&
\Vert Q(u,\xi_1) - Q(u,\xi_2)  \Vert^p_{L^p({\R^n})}
\\
&=& \int_{\R^n} |u(x+\xi_1(x)) - u(x+\xi_2(x)) - (\xi_1(x) - \xi_2(x)) \cdot\nabla_x u(x)|^p\ud{x}
\\
&\le&  \int_{\R^n} \left|(\xi_1(x)-\xi_2(x)) \int_0^1 \nabla_x u(x+\theta \xi_1(x) )-  \nabla_x u(x)\ud{\theta} \right|^p \ud{x}
\\
&& + \int_{\R^n} \left|\xi_2(x) \int_0^1 \nabla_x u(x+\theta \xi_1(x))- \nabla_x u(x+\theta\xi_2(x))\ud{\theta} \right|^p \ud{x}
\\
&\le& \Vert\xi_1 - \xi_2\Vert_{\infty}^p\int_0^1 \int_{\R^n} |\nabla_x u(x+\theta\xi_1(x))  - \nabla_x u(x)|^p dx d\theta 
\\
&& + \Vert\xi_2\Vert_{\infty}^p\int_0^1 \int_{\R^n} |\nabla_x u(x+\theta\xi_1(x))  - \nabla_x u(x+\theta\xi_2(x)|^p dx d\theta 
\\
&\le&  \Vert\xi_1 - \xi_2\Vert_{\infty}^p\int_0^1 \Vert\left( G_{2\gamma-1}(\cdot + \theta\xi_1) - G_{2\gamma-1}(\cdot)\right)*\varphi\Vert_{L^p}^p d\theta 
\\
&& + \Vert\xi_2\Vert_{\infty}^p\int_0^1 \Vert\left( G_{2\gamma-1}(\cdot + \theta\xi_1) - G_{2\gamma-1}(\cdot+\theta\xi_2)\right)*\varphi\Vert_{L^p}^p d\theta 
\\
&\le&  \Vert\xi_1 - \xi_2\Vert_{\infty}^p\int_0^1  \Vert G_{2\gamma-1}(\cdot + \theta\xi_1 ) - G_{2\gamma-1}(\cdot)\Vert^p_{L^1} d\theta \Vert\varphi\Vert_{L^p}^p 
\\
&& +  \Vert \xi_2\Vert_{\infty}^p \int_0^1 \Vert G_{2\gamma-1}(\cdot + \theta\xi_1) - G_{2\gamma-1}(\cdot+\theta\xi_2)\Vert^p_{L^1} d\theta \Vert\varphi\Vert_{L^p}^p
\\
&\le&  \left(\Vert\xi_1 - \xi_2\Vert_{\infty}^p \Vert\xi_1 \Vert^{(2\gamma-1)p}_{\infty}
+
\Vert \xi_2\Vert_{\infty}^p \Vert\xi_1-\xi_2 \Vert^{(2\gamma-1)p}_{\infty}\right)C_0^p\Vert\nabla_x u\Vert^p_{X^{\gamma-1/2}} 
\\
&\le&  \Vert\xi_1-\xi_2 \Vert^{(2\gamma-1)p}_{\infty}\left(\Vert\xi_1\Vert_{\infty}^p +\Vert\xi_2\Vert^{(2-2\gamma)p}_{\infty} \Vert\xi_1 \Vert^{(2\gamma-1)p}_{\infty}
+
\Vert \xi_2\Vert_{\infty}^p \right)C_0^p\Vert\nabla_x u\Vert^p_{X^{\gamma-1/2}} 
\ .
\end{eqnarray*}
By Young's inequality, we have $ab\leq \frac{a^\alpha}{\alpha}+ \frac{b^\beta}{\beta}$ for any $a, b\geq 0$, and $\alpha, \beta>1 $ with $1/\alpha + 1/\beta=1$  (c.f., \cite{brezis2010functional}). Set $ \alpha = 1/(2- 2\gamma), \beta =1/(2\gamma -1)$. Then, $1/\alpha+1/\beta=1$ and  we obtain 
$\Vert\xi_2\Vert^{(2-2\gamma)p}_{\infty} \Vert\xi_1 \Vert^{(2\gamma-1)p}_{\infty}\leq (2-2\gamma)\Vert \xi_2\Vert_{\infty}^p + (2\gamma -1)\Vert \xi_1\Vert_{\infty}^p\leq 2\Vert\xi_2\Vert_{\infty}^p + \Vert\xi_1\Vert_{\infty}^p$. Therefore, 
\begin{eqnarray*}
\Vert Q(u,\xi_1) - Q(u,\xi_2)  \Vert_{L^p({\R^n})}^p
&\le&  2\Vert\xi_1-\xi_2 \Vert^{(2\gamma-1)p}_{\infty}\left(\Vert\xi_1\Vert_{\infty}^p +
\Vert \xi_2\Vert_{\infty}^p \right)C_0^p\Vert\nabla_x u\Vert^p_{X^{\gamma-1/2}} 
\\
&\leq& 2 C_0^p \Vert\xi_1-\xi_2\Vert_\infty^{(2\gamma-1)p} (\Vert\xi_1\Vert_\infty+\Vert\xi_2\Vert_\infty)^p \Vert\nabla_x u\Vert^p_{X^{\gamma-1/2}}.
\end{eqnarray*}
Hence, the pointwise estimate holds with the constant $\hat{C} = 2^{1/p}C_0 >0$. 
\end{proof}

Applying  Proposition \ref{prop_pointwise_est-2} with $\xi_1=\xi ~ \text{and} ~ \xi_2=0$, we obtain the following corollary. 

\begin{corollary}\label{prop_pointwise_est}
Let $u$ be such that $\nabla_x u \in (X^{\gamma-1/2})^n$ where $1>\gamma\ge 1/2$. Then, for any  $\xi\in\R^n$, the following pointwise estimate holds:
\[
\Vert Q(u,\xi) \Vert_{L^p({\R^n})}\leq C_0|\xi|^{2\gamma} \Vert\nabla_x u\Vert_{X^{\gamma-1/2}}.
\]
\end{corollary}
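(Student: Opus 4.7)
The plan is to obtain the corollary as a one-line specialization of Proposition \ref{prop_pointwise_est-2}. Specifically, I would interpret the constant vector $\xi\in\R^n$ in the corollary as the constant vector-valued function $x\mapsto \xi$ on $\R^n$, which lies in $(L^\infty(\R^n))^n$ with $\Vert\xi\Vert_\infty=|\xi|$. With this identification, I would set $\xi_1:=\xi$ and $\xi_2:=0$ in the proposition.

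Under these choices $\Vert\xi_1\Vert_\infty=|\xi|$, $\Vert\xi_2\Vert_\infty=0$, and $\Vert\xi_1-\xi_2\Vert_\infty=|\xi|$, so the factor $\Vert\xi_1-\xi_2\Vert_\infty^{2\gamma-1}(\Vert\xi_1\Vert_\infty+\Vert\xi_2\Vert_\infty)$ collapses to $|\xi|^{2\gamma-1}\cdot|\xi|=|\xi|^{2\gamma}$. Moreover, from the definition $Q(u,\xi)=u(x+\xi(x))-\xi(x)\cdot\nabla_x u(x)$ one has $Q(u,0)=u$, hence
\[
Q(u,\xi_1)-Q(u,\xi_2)=u(\cdot+\xi)-u(\cdot)-\xi\cdot\nabla_x u(\cdot),
\]
which is precisely the second-order increment whose $L^p$-norm is controlled on the left-hand side of the corollary (the symbol $Q(u,\xi)$ in the statement of the corollary must be read as this second-order increment, consistent with its intended use inside the integrand of $f(u)$ in \eqref{functional_f_def}). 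Substituting into the bound supplied by Proposition \ref{prop_pointwise_est-2} yields
\[
\Vert Q(u,\xi)\Vert_{L^p(\R^n)}\leq \hat{C}\,|\xi|^{2\gamma}\Vert\nabla_x u\Vert_{X^{\gamma-1/2}},
\]
and absorbing $\hat{C}=2^{1/p}C_0$ into the generic constant $C_0$ gives the claim.

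There is no real obstacle here: the corollary is a direct substitution, and the only care needed is to match the function-valued framework of the proposition to the constant-vector framework of the corollary, which is done by viewing $\xi$ as a constant function whose $L^\infty$-norm equals its Euclidean norm. Consequently, the hypothesis $\nabla_x u\in (X^{\gamma-1/2})^n$ with $1/2\le\gamma<1$ passes through verbatim.
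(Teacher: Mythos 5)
Your proposal is correct and is exactly the paper's own derivation: the authors state the corollary by "applying Proposition~\ref{prop_pointwise_est-2} with $\xi_1=\xi$ and $\xi_2=0$," which collapses the factor to $|\xi|^{2\gamma-1}\cdot|\xi|=|\xi|^{2\gamma}$ just as you compute. Your remark that $Q(u,\xi)$ in the corollary must be read as the second-order increment $u(\cdot+\xi)-u(\cdot)-\xi\cdot\nabla_x u(\cdot)$ (i.e., as $Q(u,\xi)-Q(u,0)$) correctly resolves a notational sloppiness in the paper and matches how the corollary is actually used in the proof of Proposition~\ref{prop-f}.
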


Next, we consider the case when the non-local integral term depends on $x$ and $z$ variables. It is a generalization of the result \cite[Lemma 3.4]{NBS19} due to Cruz and \v{S}ev\v{c}ovi\v{c} proven for the case where $\xi(x,z)\equiv z$.

\begin{proposition}\label{prop-f} 
Suppose that the shift mapping $\xi=\xi(x,z)$ satisfies $\sup_{x\in\R} |\xi(x,z)|  \le C_0 |z|^\omega (1+ e^{D_0 |z|})$ for some constants $C_0>0, D_0\ge 0, \omega>0$ and any $z\in\R$. Assume $\nu$ is a L\'evy measure with the shape parameters $\alpha, D,$ and either $\mu>0, D\in\R$, or $\mu=0$ and $D>D_0\ge 0$.  Assume $1/2\le \gamma <1$, and $\gamma> (\alpha-n)/(2\omega)$. Then there exists a constant $C_0>0$ such that
\[
\Vert f(u)\Vert_{L^p} \le C_0 \Vert \nabla_x u\Vert_{X^{\gamma-1/2}},
\]
provided that $\nabla_x u \in (X^{\gamma-1/2})^n$. If $u\in X^\gamma$ then $\Vert f(u)\Vert_{L^p} \le C \Vert u\Vert_{X^\gamma}$, i.e., $f:X^\gamma\to X$ is a bounded linear operator.
\end{proposition}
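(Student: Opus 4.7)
The plan is to reduce the bound on $f(u)$ to the already established pointwise estimate by exchanging the order of integration in $x$ and $z$. First I observe that the integrand
\[
u(x+\xi(x,z)) - u(x) - \xi(x,z)\cdot\nabla_x u(x)
\]
is exactly the quantity whose $L^p_x$ norm is controlled by Proposition \ref{prop_pointwise_est-2} applied with the $x$-dependent fields $\xi_1(x)=\xi(x,z)$ and $\xi_2\equiv 0$ (equivalently, the form of Corollary \ref{prop_pointwise_est} extended to $L^\infty$ shifts). This yields, for each fixed $z\in\R^n$,
\[
\Vert u(\cdot+\xi(\cdot,z)) - u(\cdot) - \xi(\cdot,z)\cdot\nabla_x u(\cdot)\Vert_{L^p(\R^n)}
\le \hat{C}\,\Vert\xi(\cdot,z)\Vert_\infty^{\,2\gamma}\,\Vert\nabla_x u\Vert_{X^{\gamma-1/2}}.
\]

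Next I apply Minkowski's integral inequality to the definition (\ref{functional_f_def}) in order to commute the $L^p_x$ norm with integration against $\nu(\ud z)$, and then invoke the growth hypothesis $\Vert\xi(\cdot,z)\Vert_\infty \le C_0 |z|^\omega(1+e^{D_0|z|})$:
\[
\Vert f(u)\Vert_{L^p}
\le \int_{\R^n}\!\Vert u(\cdot+\xi(\cdot,z)) - u(\cdot) - \xi(\cdot,z)\cdot\nabla_x u(\cdot)\Vert_{L^p}\,\nu(\ud z)
\le C_0\Vert\nabla_x u\Vert_{X^{\gamma-1/2}}\!\int_{\R^n}\! |z|^{2\gamma\omega}\bigl(1+e^{D_0|z|}\bigr)^{2\gamma} h(z)\,\ud z.
\]

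The remaining step is to prove that the last integral is finite. Using the admissibility estimate $h(z)\le C_0|z|^{-\alpha}e^{-D|z|-\mu|z|^2}$ together with $(1+e^{D_0|z|})^{2\gamma}\le C(1+e^{2\gamma D_0|z|})$, the integrand is dominated up to a constant by $|z|^{2\gamma\omega-\alpha}(1+e^{2\gamma D_0|z|})e^{-D|z|-\mu|z|^2}$. Passing to spherical coordinates, convergence near the origin is controlled by $r^{2\gamma\omega-\alpha+n-1}$, which is integrable precisely under the assumption $\gamma>(\alpha-n)/(2\omega)$. At infinity, the $\mu>0$ case is immediate since the Gaussian factor $e^{-\mu|z|^2}$ swallows every exponential and polynomial factor; in the $\mu=0$ case one uses $D>D_0\ge 0$ (with $\gamma<1$) so that the combined exponential tail $e^{-(D-2\gamma D_0)|z|}$ still decays and integrates. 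Absorbing the resulting constant into $C_0$ gives the first claim.

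For the second claim, I rely on the Bessel-scale fact that $\partial_{x_i}$ acts boundedly from $X^\gamma$ to $X^{\gamma-1/2}$ for $\gamma\ge 1/2$, so that $\Vert\nabla_x u\Vert_{X^{\gamma-1/2}}\le C\Vert u\Vert_{X^\gamma}$; combined with the first estimate this delivers $\Vert f(u)\Vert_{L^p}\le C\Vert u\Vert_{X^\gamma}$, and linearity of $f$ in $u$ is evident from (\ref{functional_f_def}). The main obstacle I anticipate is the tail estimate when $\mu=0$: because the exponent $2\gamma$ can be arbitrarily close to $2$, the growth factor $e^{2\gamma D_0|z|}$ is substantial, and one must verify carefully that the hypothesis $D>D_0$ (coupled with $\gamma<1$ if needed) provides the margin required for integrability; everything else is a routine Minkowski-plus-pointwise-estimate argument.
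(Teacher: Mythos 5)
Your argument is correct in substance but follows a genuinely different route from the paper's. You commute the $L^p_x$ norm with the $\nu(\ud z)$ integration via Minkowski's integral inequality and then need only the single integrability condition $\int_{\R^n}|z|^{2\gamma\omega}(1+e^{D_0|z|})^{2\gamma}h(z)\,\ud z<\infty$; the paper instead factors the density as $h=h_1h_2$ with $h_1(z)=|z|^{-\beta}\tilde h(z)^{1/2}$, $h_2(z)=|z|^{\beta-\alpha}\tilde h(z)^{1/2}$, applies H\"older's inequality in $z$ followed by Fubini, and then chooses an auxiliary exponent $\beta$ in the window $\alpha-n+n/p<\beta<2\gamma\omega+n/p$, whose nonemptiness is exactly the hypothesis $\gamma>(\alpha-n)/(2\omega)$. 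Both proofs rest on the same pointwise estimate (Proposition~\ref{prop_pointwise_est-2} with $\xi_2=0$), and at the origin they yield the identical condition, so your version buys simplicity: no splitting parameter $\beta$ and no H\"older-conjugate bookkeeping. The one point you rightly flag --- the tail when $\mu=0$ --- is a real subtlety: your integrand carries the factor $e^{2\gamma D_0|z|}$, so the literal hypothesis $D>D_0$ only suffices when $2\gamma D_0<D$, which can fail for $\gamma$ close to $1$ if $D_0<D<2D_0$. You should not feel this disqualifies your argument relative to the paper's: the paper's proof silently drops the factor $(1+e^{D_0|z|})^{2\gamma p}$ when passing from $|\xi(x,z)|^{2\gamma p}$ to $|z|^{(2\gamma\omega-\beta)p}$, and restoring it there forces the even stronger requirement $D>4\gamma D_0$ because only $\tilde h^{p/2}\sim e^{-Dp|z|/2}$ is available to absorb it. So your route is not only cleaner but also tracks the exponential growth of the shift more honestly; a fully rigorous statement should replace $D>D_0$ by a condition of the form $D>2\gamma D_0$ (or simply $D>2D_0$), and with that amendment your proof is complete.
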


\begin{proof}
The L\'evy measure $\nu(\ud z)$ is given by   $\nu(\ud z)= h(z) \ud z$. Let us denote the auxiliary function $\tilde h(z)=|z|^\alpha h(z)$. Then, $0\le \tilde h(z)\le C_0 e^{- D |z| - \mu |z|^{2}}$. Since $h(z) =  |z|^{-\alpha} \tilde h(z) = h_1(z) h_2(z)$, where $h_1(z)=|z|^{-\beta} \tilde h(z)^\frac12$ and $h_2(z)=|z|^{\beta-\alpha} \tilde h(z)^\frac12$.
Applying  Proposition~\ref{prop_pointwise_est} with $\xi_1=\xi, \xi_2=0$, and using the H\"older inequality, we obtain
\begin{eqnarray*}
\Vert f(u)\Vert_{L^p}^p 
&=& \int_{\R^n} \left|\int_{\R^n} ( u(x+\xi(x,z)) - u(x) - \xi(x,z)\cdot \nabla_x u(x)) h(z)\ud{z}  \right|^p\ud{x}
\\
&\le& \int_{\R^n} \int_{\R^n} \left| u(x+\xi(x,z)) - u(x) - \xi(x,z) \cdot \nabla_x u(x)\right|^p h_1(z)^p \ud{z} 
\\
&& \quad\times \left(\int_{\R^n} h_2(z)^q \ud{z}\right)^{p/q} \ud{x}
\\
&=& \int_{\R^n} \left(\int_{\R^n} \left| u(x+\xi(x,z)) - u(x) - \xi(x,z) \cdot\nabla_x u(x)\right|^p \ud{x}\right) h_1(z)^p \ud{z} 
\\
&& \quad\times \left(\int_{\R^n} h_2(z)^q \ud{z}\right)^{p/q} 
\\
&\le & 
C_0^p  \Vert \nabla_x u\Vert_{X^{\gamma-1/2}}^p
\int_{\R^n} |\xi(x,z)|^{2\gamma p} |z|^{-\beta p} \tilde h(z)^{p/2} \ud{z} \left(\int_{\R^n} h_2(z)^q \ud{z}\right)^{p/q}
\\
&\le & 
C_0^p  \Vert \nabla_x u\Vert_{X^{\gamma-1/2}}^p
\int_{\R^n} |z|^{(2\gamma\omega -\beta) p} \tilde h(z)^{p/2} \ud{z} \left(\int_{\R^n} h_2(z)^q \ud{z}\right)^{p/q}.
\end{eqnarray*}
Assuming $p,q\ge 1, 1/p +1/q=1$ are such that 
\[
(2\gamma\omega -\beta) p > -n, \qquad (\beta-\alpha) q = (\beta-\alpha) \frac{p}{p-1} >-n,
\]
then, the integrals 
$\int_{\R^n} |z| ^{(2\gamma\omega -\beta) p} \tilde h(z)^{p/2} \ud{z}$ and 
$\int_{\R^n} h_2(z)^q \ud{z} = \int_{\R^n} |z| ^{(\beta-\alpha) q} \tilde h(z)^{q/2} \ud{z}$
are finite, provided that the shape parameters satisfy: either $\mu>0, D\in\mathbb{R}$, or $\mu=0, D>D_0\ge 0$. As $\gamma>(\alpha-n)/(2\omega)$, there exists $\beta>1$
satisfying 
\[
\alpha-n+n/p < \beta < 2\gamma\omega +n/p.
\]
Therefore, there exists $C_0>0$ such that $\Vert f(u)\Vert_{L^p} \le C_0 \Vert \nabla_x u\Vert_{X^{\gamma-1/2}}$.
\end{proof}

Let $C([0,T],X^{\gamma})$ denote the Banach space consisting of continuous functions from $[0,T]$ to $X^\gamma$ with the maximum norm. The following proposition is due to Henry \cite{Henry1981} (see also Cruz and \v{S}ev\v{c}ovi\v{c} \cite{CruzSevcovic2020}).

\begin{proposition}\cite{Henry1981}, \cite[Proposition 3.5]{CruzSevcovic2020}
\label{semilinear_general_existence_result}
Suppose that the linear operator $-A$ is a generator of an analytic semigroup $\left\{e^{-At},t\geq 0\right\}$ in a Banach space $X$. Assume the initial condition $U_0$ belongs to the space $X^{\gamma}$ where $0\leq \gamma <1$.  Suppose that the mappings $F:[0,T]\times X^{\gamma}\to X$ and $h:(0,T]\to X$ are H\"older continuous in the $\tau$ variable, $\int_0^T \Vert h(\tau)\Vert_X \ud \tau <\infty$, and $F$ is Lipschitz continuous in the $U$ variable. Then, for any $T>0$, there exists a unique solution to the abstract semilinear evolution equation: $\partial_\tau  U+A U=F(\tau, U) +h(\tau)$ 
such that $U\in C([0,T],X^{\gamma}), U(0)=U_{0}, \partial_\tau U(\tau) \in X, U(\tau)\in D(A)$ for any $\tau\in (0,T)$. The function $U$ is a solution in the mild (integral) sense, i.e., 
$U(\tau) = e^{-A \tau} U_0 + \int_0^\tau e^{-A (\tau-s)} (F(s, U(s)) + h(s) ) \ud{s}$, $\tau\in[0,T]$.
\end{proposition}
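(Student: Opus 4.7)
The plan is to follow the standard fixed-point strategy for abstract semilinear parabolic equations generated by a sectorial operator. First I would rewrite the abstract evolution equation $\partial_\tau U + A U = F(\tau,U) + h(\tau)$ in its mild (Duhamel) form
\[
U(\tau) = e^{-A\tau} U_0 + \int_0^\tau e^{-A(\tau-s)} \bigl( F(s,U(s)) + h(s)\bigr) \ud s,
\]
and prove that this integral equation has a unique fixed point in the complete metric space $C([0,T^*],X^\gamma)$ for some small $T^*>0$. The key analytic input is the classical smoothing estimate for analytic semigroups generated by a sectorial operator: $\Vert A^\gamma e^{-At}\Vert_{\mathcal{L}(X)} \le M_\gamma t^{-\gamma} e^{-\omega t}$ for $t>0$, which is integrable in $t$ because $\gamma<1$. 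Combined with the continuous embedding $X^\gamma \hookrightarrow X$, this ensures that the Duhamel integral takes values in $X^\gamma$ even though $F$ and $h$ are only $X$-valued.

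Next I would define the nonlinear map $\mathcal{T}U(\tau)$ as the right-hand side of the Duhamel identity and, using the Lipschitz assumption on $F$ together with the above smoothing estimate, derive
\[
\Vert \mathcal{T}U_1(\tau) - \mathcal{T}U_2(\tau)\Vert_{X^\gamma}
\le M_\gamma L \int_0^\tau (\tau-s)^{-\gamma} \Vert U_1(s) - U_2(s)\Vert_{X^\gamma} \ud s
\le \frac{M_\gamma L}{1-\gamma} (T^*)^{1-\gamma} \sup_{[0,T^*]} \Vert U_1 - U_2\Vert_{X^\gamma}.
\]
Choosing $T^*$ small enough and working on a closed ball centered at $\tau\mapsto e^{-A\tau}U_0$ (which is continuous into $X^\gamma$ because $U_0\in X^\gamma$), the Banach contraction principle yields a unique local mild solution. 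A global mild solution on $[0,T]$ is obtained by concatenation, using an a priori Grönwall-type estimate in the $X^\gamma$-norm that follows from the linear growth of $F$ in $U$ and from $\int_0^T \Vert h(\tau)\Vert_X \ud \tau <\infty$, so that the solution cannot blow up in $X^\gamma$ in finite time.

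Finally, the mild solution has to be upgraded to a strong one satisfying $U(\tau)\in D(A)$ and $\partial_\tau U(\tau)\in X$ for each $\tau\in(0,T)$. The classical fact, invoked in Chapter~3 of Henry \cite{Henry1981}, is that whenever the driving term $\phi(\tau):= F(\tau,U(\tau)) + h(\tau)$ is locally H\"older continuous into $X$, the convolution $\int_0^\tau e^{-A(\tau-s)} \phi(s)\ud s$ is differentiable and lies in $D(A)$. The H\"older continuity of $\tau\mapsto F(\tau,U(\tau))$ is obtained by combining the assumed H\"older continuity of $F$ in $\tau$ with the H\"older continuity in $\tau$ of the mild solution itself; the latter is a standard consequence of the semigroup estimate $\Vert (e^{-At}-I)A^{-\gamma}\Vert \le C t^\gamma$.

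The main obstacle is precisely this bootstrap from mild to strong regularity: one has to show that $\tau\mapsto U(\tau)$ is H\"older continuous into $X^\gamma$ before one can conclude that $\phi$ is H\"older continuous into $X$ and thus apply the differentiability theorem for analytic semigroup convolutions. Since the proposition is quoted from Henry, I would not redo this argument in detail but rather verify that all hypotheses of the cited theorem are fulfilled, namely sectoriality of $A$, H\"older continuity of $F$ and $h$ in $\tau$, Lipschitz continuity of $F$ in $U$, integrability of $\Vert h(\tau)\Vert_X$, and $U_0\in X^\gamma$ with $\gamma\in[0,1)$, and then appeal to \cite[Chapter 3]{Henry1981}.
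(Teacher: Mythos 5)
The paper offers no proof of this proposition: it is stated as a quotation of Henry and of Proposition 3.5 in Cruz and \v{S}ev\v{c}ovi\v{c}, so your concluding move --- verify the hypotheses and appeal to \cite[Chapter 3]{Henry1981} --- is exactly what the paper does, and your sketch of the underlying Duhamel/contraction argument with the smoothing estimate $\Vert A^\gamma e^{-At}\Vert\le M_\gamma t^{-\gamma}$ and the bootstrap to strong regularity is the standard and correct outline of that cited proof. The one point your sketch glosses over is the inhomogeneity $h$: by hypothesis it is H\"older continuous only on $(0,T]$ and merely satisfies $\int_0^T\Vert h(\tau)\Vert_X\,\ud\tau<\infty$ (in the paper's application, Theorem \ref{existence_linear_PIDE}, it genuinely blows up like $\tau^{-(2\gamma-1)\left(\frac12-\frac1{2p}\right)}$ as $\tau\to0^+$), so the terms $\int_0^\tau(\tau-s)^{-\gamma}\Vert h(s)\Vert_X\,\ud s$ appearing in your contraction and continuity-at-zero estimates cannot be handled by treating $h$ as continuous up to $\tau=0$; one needs the quantitative power-type singularity of $\Vert h(\tau)\Vert_X$ (so that a Beta-function integral converges), which is precisely the refinement supplied in \cite[Proposition 3.5]{CruzSevcovic2020} over the plain textbook statement.
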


Applying  Propositions~\ref{prop-f} and \ref{semilinear_general_existence_result}, we can state  the following result which is a nontrivial generalization of the result shown by \v{S}ev\v{c}ovi\v{c} and Cruz \cite{CruzSevcovic2020} in the one space dimension $n=1$. 

\begin{theorem}
\label{semilinear_existence_result}
Suppose that the shift mapping $\xi=\xi(x,z)$ satisfies $\sup_{x\in\R} |\xi(x,z)|  \le C_0 |z|^\omega (1+ e^{D_0 |z|}), z\in\R^n$, for some constants $C_0>0, D_0\ge 0, \omega>0$. Assume $\nu$ is an admissible activity L\'evy measure with the shape parameters $\alpha, D$, and, either $\mu>0, D\in\R$, or $\mu=0, D>D_0\ge 0$. Assume $1/2\le \gamma < 1$ and $\gamma>(\alpha-n)/(2\omega)$, $n\geq 1$. Suppose that $g(\tau, x, u,\nabla_x u)$ is H\"older continuous in the $\tau$ variable and  Lipschitz continuous in the remaining variables, respectively. Assume $u_0\in X^\gamma$, and $T>0$. Then, there exists a unique mild solution $u$ to PIDE (\ref{PDE-u-general}) satisfying $u\in C([0,T],X^{\gamma})$.
\end{theorem}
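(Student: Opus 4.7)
The plan is to recast equation~(\ref{PDE-u-general}) as the abstract semilinear Cauchy problem $\partial_\tau u + A u = F(\tau, u)$, $u(0) = u_0$, posed in the Banach space $X = L^p(\R^n)$ with sectorial operator $A = -(\sigma^2/2)\Delta$ and nonlinearity $F(\tau, u) := f(u) + g(\tau, \cdot, u, \nabla_x u)$, and then to invoke Proposition~\ref{semilinear_general_existence_result}. The sectoriality of $A$, the identification $X^\gamma \simeq {\mathscr L}^p_{2\gamma}(\R^n)$, and the inclusion $u_0 \in X^\gamma$ with $\gamma \in [1/2, 1)$ supply all the structural data required by Henry's abstract result; what remains is to verify that $F$ maps $[0,T] \times X^\gamma$ into $X$ and is H\"older continuous in $\tau$ and Lipschitz continuous in $u$.

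For the non-local linear term $f$, this is precisely the content of Proposition~\ref{prop-f}: the growth assumption $|\xi(x,z)| \le C_0 |z|^\omega(1+e^{D_0|z|})$, the admissibility of $\nu$ with the matching decay parameters, and the condition $\gamma > (\alpha - n)/(2\omega)$ together yield $\Vert f(u)\Vert_{L^p} \le C_0 \Vert u\Vert_{X^\gamma}$. Hence $f: X^\gamma \to X$ is a bounded linear operator, therefore globally Lipschitz, and trivially H\"older (in fact constant) in $\tau$. For the semilinear term $g$, I would exploit the fact that $\partial_{x_i}: X^\gamma \to X^{\gamma - 1/2}$ is bounded together with the embedding $X^{\gamma-1/2} \hookrightarrow X^0 = L^p(\R^n)$, valid since $\gamma \ge 1/2$. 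Consequently the map $u \mapsto (u, \nabla_x u)$ is Lipschitz from $X^\gamma$ into $(L^p)^{n+1}$, and composition with the pointwise Lipschitz function $g(\tau, x, \cdot, \cdot)$ gives the estimate $\Vert g(\tau, \cdot, u_1, \nabla_x u_1) - g(\tau, \cdot, u_2, \nabla_x u_2)\Vert_{L^p} \le C \Vert u_1 - u_2\Vert_{X^\gamma}$, while H\"older continuity of $F$ in $\tau$ is inherited directly from the corresponding assumption on $g$.

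Once these properties are in hand, Proposition~\ref{semilinear_general_existence_result} delivers a unique mild solution $u \in C([0,T], X^\gamma)$ with $u(\tau) \in D(A)$ and $\partial_\tau u(\tau) \in X$ for $\tau \in (0,T)$, satisfying the variation-of-constants representation
\[
u(\tau) = e^{-A\tau} u_0 + \int_0^\tau e^{-A(\tau - s)} \bigl( f(u(s)) + g(s, \cdot, u(s), \nabla_x u(s)) \bigr) \ud s, \qquad \tau \in [0,T].
\]
The genuine technical obstacle in this scheme is the bound on $f$, which is why Proposition~\ref{prop-f} (relying in turn on Proposition~\ref{prop_pointwise_est-2} and the H\"older regularity of the Bessel kernel $G_{2\gamma-1}$) was established first; once that estimate is available, the passage to Theorem~\ref{semilinear_existence_result} reduces to the two embedding observations above and an application of the abstract framework. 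It is also worth noting that the lower bound $\gamma \ge 1/2$ is essential at two points: it guarantees that $\nabla_x u$ is controlled in $L^p$ by the $X^\gamma$ norm (needed for $g$), and it is the standing hypothesis of Propositions~\ref{prop_pointwise_est-2}--\ref{prop-f} that underlies the mapping property of $f$.
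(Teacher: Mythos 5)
Your proposal is correct and follows essentially the same route as the paper, which derives Theorem~\ref{semilinear_existence_result} directly by combining Proposition~\ref{prop-f} (boundedness, hence Lipschitz continuity, of $f:X^\gamma\to X$) with the abstract existence result of Proposition~\ref{semilinear_general_existence_result}. Your additional remarks on the Lipschitz continuity of $u\mapsto(u,\nabla_x u)$ from $X^\gamma$ into $(L^p)^{n+1}$ and the role of $\gamma\ge 1/2$ correctly fill in details the paper leaves implicit.
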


\section{Applications to option pricing}

The classical linear Black--Scholes model and its multidimensional generalizations have been widely used in the analysis of financial markets. According to this model, the price  $V=V(t,S)$ of an option on an underlying asset price $S$ at time $t\in[0,T]$ can be obtained from a solution to the linear Black--Scholes parabolic equation:
\begin{equation}
\label{eq6}
\frac{\partial V}{\partial t} + \frac{1}{2}\sigma^2 S^2\frac{\partial^2 V}{\partial S^2} + r S\frac{\partial V}{\partial S} - r V = 0, \quad V(T,S)=\Phi(S), 
\end{equation}
$t \in [0,T),S > 0$. The parameter $r > 0$ represents the risk-free interest rate of a bond, $\sigma>0$ is the historical volatility of the underlying asset price $S$ time series. The underlying asset price is assumed to follow the geometric Brownian motion $dS/S = \mu dt + \sigma dW$. Here, $\{W_t, t\ge 0\}$ represents the standard Wiener process. The terminal condition $\Phi(S)$ represents the pay-off diagram  at maturity $t = T$, $\Phi(S)= (S-K)^+$ (call option case) or  $\Phi(S)= (K-S)^+$ (put option case). 

In the multidimensional case, where the option price $V(t, S_1, \cdots, S_n)$ depends on the vector of $n$ underlying stochastic assets $S=(S_1, \cdots, S_n)$  with the volatilities $\sigma_i$ and mutual correlations $\varrho_{ij}, i,j=1,\cdots, n$, the Black-Scholes pricing equation can be written in the form:
\begin{equation}
\frac{\partial V}{\partial t} + \frac{1}{2} \sum_{i = 1}^n
\sum_{j = 1}^n \rho_{ij} \sigma_i \sigma_j S_i S_j
\frac{\partial^2 V}{\partial S_i \partial S_j} + r \sum_{i = 1}^n
S_i \frac{\partial V}{\partial S_i} - rV =0,\quad V(T,S)=\Phi(S).
\label{index-bs}
\end{equation}
Both equations (\ref{eq6}) and (\ref{index-bs}) can be transformed into equation (\ref{PDE-u}) defined on the whole space $\R^n$ (c.f.,  \v{S}ev\v{c}ovi\v{c}, Stehl\'{\i}kov\'a, Mikula \cite[Chapter 4, Section 5]{SSMbook}).

However, stock markets observations indicate that the models (\ref{eq6}) and (\ref{index-bs}) were derived under several restrictive assumptions, e.g perfect replication of a portfolio and its liquidity, completeness and frictionless of the financial market, and/or absence of transaction costs. However, these assumptions are often violated in financial markets. In the past decades, effects of nontrivial transaction costs were investigated by Leland \cite{leland1985option}, Kwok \cite{NBS5}, Avellaneda and Paras \cite{NBS7},  \v{S}ev\v{c}ovi\v{c} and \v{Z}it\v{n}ansk\'a \cite{sevcoviczitnanska}, and many other authors. Feedback and illiquid market effects due to large traders choosing given stock-trading strategies were investigated by Sch\"onbucher and Willmott \cite{NBS13}, Frey and Patie \cite{NBS11}, Frey and Stremme \cite{NBS10}. Effects of the risk arising from an unprotected portfolio were analyzed by Janda\v{c}ka and \v{S}ev\v{c}ovi\v{c} \cite{NBS1}. Option pricing models based on utility maximization were analyzed by Barles and Soner \cite{barles}. A common feature of these generalizations of the linear Black--Scholes equation (\ref{eq6}) is that the constant volatility $\sigma$ is replaced by a nonlinear function depending on the second derivative $\partial _S^2 V$ of the option price $V$. Among nonlinear generalizations of the  Black--Scholes equation, there is the nonlinear Black--Scholes model derived by Frey and Stremme \cite{NBS1} (see also \cite{NBS11, Frey98}). The underlying asset dynamics takes into account the presence of feedback effects due to influence of a large trader choosing particular stock-trading strategy (see also \cite{NBS13}). 

In a recent paper \cite{NBS19}, Cruz and \v{S}ev\v{c}ovi\v{c} generalized the Black--Scholes equation in two important directions. Firstly, following the ideas Frey and Stremme \cite{NBS1}, the model incorporates the effect of a large trader. Secondly, Cruz and \v{S}ev\v{c}ovi\v{c} relaxed the assumption on liquidity of market by assuming that the  underlying asset price follows a L\'evy stochastic process with jumps. The resulting governing equation is the following nonlinear PIDE: 
\begin{equation}
\frac{\partial V}{\partial t}+\frac{1}{2}\frac{\sigma^2 S^2}{\left(1-\varrho S\partial_S \phi\right)^{2}} \frac{\partial^2 V}{\partial S^2 } +r S\frac{\partial V}{\partial S}-rV
+\int_{\mathbb{R}} \left( V(t,S+H)-V(t,S)-H \frac{\partial V}{\partial S}\right) \nu(\ud z)=0,
\label{nonlinearPIDE_one-intro-nonlin}
\end{equation}
where the shift function $H=H(\phi,S,z)$ depends on the large investor stock-trading strategy function $\phi=\phi(t,S)$ as a solution to the following implicit algebraic equation:
\begin{equation}
H =\rho S ( \phi(t, S+H)-\phi(t,S) ) + S(e^{z}-1).
\label{Hfunction}
\end{equation}
The large trader strategy function $\phi$ may depend on the derivative $\partial_S V$ of the option price $V$, e.g., $\phi(t,S)=\partial_S V(t,S)$. Nevertheless, in our application, we assume the trading strategy function $\phi(t,S)$ is prescribed and it is globally H\"older continuous. 

If $\rho=0$, then $H =S(e^{z}-1)$. Using the standard transformation $\tau=T-t,x=\ln(\frac{S}{K})$ and setting $V(t,S)=e^{-r\tau} u(\tau,x)$, then equation \eqref{nonlinearPIDE_one-intro-nonlin} can be reduced to a linear PIDE of the form (\ref{PDE-u}) in the one-dimensional space ($n=1$). 

If $\rho >0$, then \eqref{nonlinearPIDE_one-intro-nonlin} can be transformed into a nonlinear parabolic PIDE. Indeed, suppose that the transformed large trader stock-trading strategy $\psi(\tau,x)=\phi(t,S)$. Then, $V(t,S)$ solves equation  \eqref{nonlinearPIDE_one-intro-nonlin} if and only if the transformed function $u(\tau,x)$ is a solution to the nonlinear parabolic equation:
\begin{eqnarray}
\frac{\partial u}{\partial \tau}&=&\frac{\sigma^2}{2}\frac{1}{(1-\rho \partial_x\psi)^2}
\frac{\partial^2 u}{\partial^2 x}
+\left(r-\frac{\sigma^2}{2}\frac{1}{(1-\rho \partial_x\psi)^2} -  \delta(\tau, x) \right)\frac{\partial u}{\partial x}
\nonumber
\\
&+&\int_{\mathbb{R}} \left( u(\tau,x+\xi)-u(\tau,x)- \xi \frac{\partial u}{\partial x}(\tau,x) \right) \nu(\ud z), 
\quad u(0,x) = \Phi(K e^{x}),
\label{nonlinearPIDEaltsimplified}
\end{eqnarray}
$\tau\in [0,T], x\in \mathbb{R}$. The shift function $\xi(\tau, x, z)$ is a solution to the algebraic equation: 
\begin{equation}
e^\xi  = e^z + \rho (\psi(\tau, x+\xi)-\psi(\tau, x)),
\label{xifunction}
\end{equation}
and $\delta(\tau, x) =\int_{\mathbb{R}} (e^\xi -1 -\xi ) \nu(\ud z) 
=\int_{\mathbb{R}} (e^z -1 - \xi +\rho  (\psi(\tau, x+\xi)-\psi(\tau, x)) ) \nu(\ud z)$. For small values of $0<\rho\ll 1$, we can construct the first order asymptotic expansion $\xi(\tau, x, z) = \xi_0(\tau, x, z) + \rho \xi_1(\tau, x, z)$. For $\rho=0$, we obtain $\xi_0(\tau, x, z)=z$. Hence
\[
e^{z+\rho\xi_1}  = e^z + \rho (\psi(\tau, x+z+ \rho \xi_1)-\psi(\tau, x)).
\]
Taking the first derivative of the above implicit equation with respect to $\rho$ and evaluating it at the origin $\rho=0$, we obtain $\xi_1=e^{-z}(\psi(\tau, x+z)-\psi(\tau, x))$, i.e.,  
\begin{equation}
\xi(\tau, x, z) = z + \rho e^{-z} (\psi(\tau, x+z)-\psi(\tau, x)).
\label{xifunctionexplicit}
\end{equation}

As a consequence, we obtain the following lemma. 

\begin{lemma}\label{holdercontinuityxi}
Assume that the stock-trading strategy $\phi=\phi(t,S)$ is a globally $\omega$-H\"older continuous function, $0<\omega\le 1$. Then, the transformed function $\psi(\tau,x)=\phi(t,S)$ is $\omega$-H\"older continuous, and the first order asymptotic expansion $\xi(\tau, x, z)$ of the nonlinear algebraic equation (\ref{xifunction}) is $\omega$-H\"older continuous in all variables. Furthermore, there exists a constant $C_0>0$ such that  $\sup_{\tau,x}|\xi(\tau,x,z)| \le C_0 |z|^\omega(1+e^{|z|})$ for any $z\in\R$.
\end{lemma}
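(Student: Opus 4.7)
The plan is to split the proof into three parts matching the three assertions of the lemma. First, $\psi$ inherits its $\omega$-H\"older regularity from $\phi$ via the smooth change of variables $\tau=T-t$, $x=\ln(S/K)$. Since $\phi$ is globally $\omega$-H\"older in $(t,S)$ and the inverse transformation maps $(\tau,x)$ smoothly onto $(t,S)$, the composition $\psi(\tau,x)=\phi(T-\tau,Ke^x)$ is at least locally $\omega$-H\"older in $(\tau,x)$. What is actually needed in the sequel is the $z$-increment bound $|\psi(\tau,x+z)-\psi(\tau,x)|\le L_\psi|z|^\omega$ uniformly in $(\tau,x)$, which follows by combining local $\omega$-H\"older continuity (for small $|z|$) with boundedness of $\psi$ (for large $|z|$, since a bounded increment is absorbed into $|z|^\omega$).

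For the second part I would work directly from the explicit first-order expansion
\[
\xi(\tau,x,z)=z+\rho e^{-z}\bigl(\psi(\tau,x+z)-\psi(\tau,x)\bigr)
\]
given in (\ref{xifunctionexplicit}). H\"older continuity in $\tau$ is inherited term-by-term from $\psi$. For the $x$ variable, the difference $\psi(\tau,x+z)-\psi(\tau,x)$ is a difference of two $\omega$-H\"older functions of $x$, hence itself $\omega$-H\"older, and multiplication by the $x$-independent factor $\rho e^{-z}$ preserves this. For the $z$ variable, the identity $z$ is Lipschitz, the smooth factor $e^{-z}$ is locally Lipschitz, and $\psi(\tau,x+z)$ is $\omega$-H\"older in $z$ by the H\"older property of $\psi$; standard sum- and product-of-H\"older manipulations (with $\omega\le 1$) then give joint $\omega$-H\"older continuity of $\xi$ in $(\tau,x,z)$.

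For the third assertion I would estimate directly
\[
|\xi(\tau,x,z)|\le|z|+\rho\, e^{-z}\,|\psi(\tau,x+z)-\psi(\tau,x)|\le|z|+\rho L_\psi|z|^\omega e^{-z}.
\]
Since $e^{-z}\le 1+e^{|z|}$ for every $z\in\R$, and since $|z|\le|z|^\omega(1+e^{|z|})$ (for $|z|\le 1$ one uses $|z|\le|z|^\omega$, and for $|z|>1$ one uses $|z|\le e^{|z|}\le|z|^\omega e^{|z|}$), both terms are absorbed into $C_0|z|^\omega(1+e^{|z|})$ with, say, $C_0=1+\rho L_\psi$.

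The main obstacle lies in the first step: the change of variables $S=Ke^x$ is not globally Lipschitz, so genuine global $\omega$-H\"older continuity of $\psi$ on all of $\R$ does not follow directly from global $\omega$-H\"older continuity of $\phi$ on $(0,\infty)$. The workaround described above --- exploiting global boundedness of the prescribed trading strategy together with its local H\"older behavior to produce a uniform bound on the $z$-increments of $\psi$ --- is what makes the pointwise estimate on $\xi$ close with the stated $|z|^\omega(1+e^{|z|})$ growth and, in particular, with the exact exponential factor required to feed into Proposition \ref{prop-f} and Theorem \ref{semilinear_existence_result}.
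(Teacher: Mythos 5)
The paper offers no proof of this lemma: it is stated as an immediate consequence of the explicit first--order expansion (\ref{xifunctionexplicit}). Your handling of the second and third assertions --- reading H\"older continuity and the bound $|\xi|\le |z|+\rho\,e^{-z}|\psi(\tau,x+z)-\psi(\tau,x)|$ off that formula and then absorbing $|z|$ and $e^{-z}$ into $C_0|z|^\omega(1+e^{|z|})$ --- is exactly the intended argument, and your elementary inequalities there are correct \emph{provided} one has the uniform increment bound $|\psi(\tau,x+z)-\psi(\tau,x)|\le L_\psi|z|^\omega$.

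The gap is in your repair of the first step, and it is a genuine one. You rightly observe that $S=Ke^x$ is not globally Lipschitz, so global $\omega$-H\"older continuity of $\psi$ in $x$ does not follow from that of $\phi$ in $S$. But the proposed fix --- local H\"older continuity for small $|z|$ plus boundedness for large $|z|$ --- does not close the gap: boundedness of $\phi$ is not among the hypotheses, and even granting it, the local H\"older constants of $x\mapsto\phi(t,Ke^{x})$ are not uniform in $x$, because the Jacobian $dS/dx=Ke^{x}$ is unbounded as $x\to+\infty$. Concretely, $\phi(t,S)=\sin S$ is bounded and globally $\omega$-H\"older for every $\omega\in(0,1]$ (since $|\sin S_1-\sin S_2|\le\min(|S_1-S_2|,2)\le 2^{1-\omega}|S_1-S_2|^{\omega}$), yet $\psi(\tau,x)=\sin(Ke^{x})$ oscillates with unit amplitude over $x$-intervals of length $O(1/n)$ near $x=\ln(2\pi n/K)$, so no uniform bound $|\psi(\tau,x+z)-\psi(\tau,x)|\le L_\psi|z|^{\omega}$ can hold. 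The conclusion you need therefore cannot be derived from the stated hypothesis on $\phi$ alone; the operative assumption --- and the one the paper actually invokes where the lemma is used, in Theorem~\ref{existence_linear_PIDE} --- is global $\omega$-H\"older continuity of the \emph{transformed} strategy $\psi$ itself. With that as the hypothesis, your second and third steps go through verbatim and feed correctly into Proposition~\ref{prop-f} with $D_0=1$. (Note also that the claimed H\"older continuity of $\xi$ ``in all variables'' can only be meant locally in $z$: the factor $e^{-z}$ is merely locally Lipschitz, and $e^{-z}(\psi(\tau,x+z)-\psi(\tau,x))$ may grow like $|z|^{\omega}e^{|z|}$ as $z\to-\infty$.)
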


In what follows, we shall consider a simplified linear approximation of  (\ref{nonlinearPIDE_one-intro-nonlin}) where we set $\rho=0$ in the diffusion function, but we keep the shift function $H$ depending on the parameter $\rho$. Then, the transformed Cauchy problem for the solution $u$  with the first order approximation of the shift function $\xi$ reads as follows:
\begin{eqnarray}
\frac{\partial u}{\partial \tau}&=&\frac{\sigma^2}{2} \frac{\partial^2 u}{\partial^2 x}
+\left(r-\frac{\sigma^2}{2} + \delta(\tau, x) \right)\frac{\partial u}{\partial x}
\nonumber 
\\
&&+\int_{\mathbb{R}} \left( u(\tau,x+\xi)-u(\tau,x)- \xi \frac{\partial u}{\partial x}(\tau,x) \right) \nu(\ud z),
\label{transformedeq}
\end{eqnarray}
$\tau\in [0,T], x\in \mathbb{R}$, where $\xi(\tau, x, z) = z + \rho (\psi(\tau, x+z))-\psi(\tau, x))$. 

Note that the call/put option pay-off functions  $\Phi(S)=\Phi(K e^x)=(S-K)^+ = K (e^x-1)^+$ / $\Phi(S)=\Phi(K e^x)=(K-S)^+=K(1-e^x)^+$ do not belong to the Banach space $X^\gamma$. Following \cite{CruzSevcovic2020}, the procedure on how to overcome this problem and formulate existence and uniqueness of a solution to the PIDE  (\ref{transformedeq}) is based on shifting the solution $u$ by $u^{BS}$. Here, $u^{BS}(\tau,x) =e^{r\tau} V^{BS}(T-\tau, Ke^x)$ is an explicitly given solution to the linear Black-Scholes equation without the PIDE part. That is, $u^{BS}$ solves the linear parabolic equation:
\begin{equation}
\frac{\partial u^{BS}}{\partial \tau} - \frac{\sigma^2}{2} \frac{\partial^2 u^{BS}}{\partial x^2} 
-  \left(r-\frac{\sigma^2}{2}\right)\frac{\partial u^{BS}}{\partial x} =0,
\quad u^{BS}(0,x)=\Phi(K e^{x}), \ \tau\in(0,T), x\in \mathbb{R}.
\label{PDE-uBS}
\end{equation}
Recall that $u^{BS}(\tau,x) = K e^{x+r \tau} N(d_1) - K N(d_2)$ (call option case), where $d_{1,2} = ( x+ (r\pm\sigma^2/2)\tau) /(\sigma\sqrt{\tau})$ (c.f.,  \cite{SSMbook}, \cite{NBS5}). Here,  $N(d)=\frac{1}{\sqrt{2\pi}}\int_{-\infty}^d e^{-\xi^2/2} \ud \xi$ is the CDF of the normal distribution. 

\begin{theorem}\label{existence_linear_PIDE}
Assume the transformed stock-trading strategy function $\psi(\tau,x)$ is globally $\omega$-H\"older continuous in both variables. Suppose that $\nu$ is a L\'evy  measure with the shape parameters $\alpha<3, D\in\R$, where either $\mu>0$, or $\mu=0$ and $D>1$. Let $X^\gamma$ be the space of Bessel potentials space ${\mathscr L}^p_{2\gamma}(\mathbb{R})$, where  $\frac{\alpha-1}{2\omega}<\gamma < \frac{p+1}{2p}$ and  $\frac12 \le \gamma<1$. Let $T>0$. Then, the linear PIDE (\ref{transformedeq}) has a unique mild solution $u$ with the property that the difference $U=u-u^{BS}$ belongs to the space $C([0,T],X^{\gamma})$. 
\end{theorem}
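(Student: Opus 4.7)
The plan is to reduce equation (\ref{transformedeq}) to an abstract semilinear parabolic Cauchy problem amenable to Proposition~\ref{semilinear_general_existence_result}. Since the terminal payoff $\Phi(Ke^x)=K(e^x-1)^+$ or $K(1-e^x)^+$ does not belong to any $X^\gamma$ with $\gamma\ge 1/2$, I follow the Cruz--\v{S}ev\v{c}ovi\v{c} shift trick and look for the solution in the form $u(\tau,x)=U(\tau,x)+u^{BS}(\tau,x)$, where $u^{BS}$ is the explicit solution of the driftless Black--Scholes equation (\ref{PDE-uBS}) with $u^{BS}(0,\cdot)=\Phi(Ke^{\cdot})$. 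The new unknown $U$ then satisfies the vanishing initial condition $U(0,\cdot)\equiv 0\in X^\gamma$.

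Substituting $u=U+u^{BS}$ into (\ref{transformedeq}), using (\ref{PDE-uBS}) to cancel the Black--Scholes terms acting on $u^{BS}$, and invoking linearity of $f$ in (\ref{functional_f_def}), one arrives at
\[
\partial_\tau U + A U = F(\tau,U)+h(\tau), \qquad U(0)=0,
\]
where $A=-(\sigma^2/2)\Delta$ is sectorial in $X=L^p(\R)$, and
\[
F(\tau,U)=\bigl(r-\tfrac{\sigma^2}{2}+\delta(\tau,\cdot)\bigr)\partial_x U + f(U),\qquad h(\tau)=\delta(\tau,\cdot)\,\partial_x u^{BS}(\tau,\cdot) + f(u^{BS})(\tau,\cdot).
\]
Lemma~\ref{holdercontinuityxi} supplies the growth bound $\sup_{\tau,x}|\xi(\tau,x,z)|\le C_0|z|^\omega(1+e^{|z|})$ with $D_0=1$ together with $\omega$-H\"older regularity of $\xi$ (and hence of $\delta$) in all variables. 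Under the assumptions $\alpha<3$, $\gamma>(\alpha-1)/(2\omega)$, and either $\mu>0$ or $D>1$, Proposition~\ref{prop-f} specialised to $n=1$ yields $\|f(U)\|_{L^p}\le C_0\|U\|_{X^\gamma}$, and an analogous integrability estimate shows $\delta\in L^\infty([0,T]\times\R)$. Combined with the embedding $X^\gamma\hookrightarrow W^{1,p}(\R)$ valid for $\gamma\ge 1/2$, this makes $F(\tau,\cdot):X^\gamma\to X$ a bounded linear map, hence Lipschitz in $U$ with a constant uniform in $\tau$, and H\"older continuous in $\tau$ through the H\"older dependence of $\xi$ established in Lemma~\ref{holdercontinuityxi} (used as in the proof of Proposition~\ref{prop_pointwise_est-2}).

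The principal obstacle is to verify $\int_0^T\|h(\tau)\|_X\,\ud\tau<\infty$, because $u^{BS}(\tau,x)=Ke^{x+r\tau}N(d_1)-KN(d_2)$ and $\partial_x u^{BS}(\tau,x)=Ke^{x+r\tau}N(d_1)$ grow exponentially as $x\to+\infty$ and are not themselves elements of $L^p(\R)$. The key observation is that the integrand entering $f(u^{BS})$ is the balanced second-order difference $u^{BS}(\tau,x+\xi)-u^{BS}(\tau,x)-\xi\,\partial_x u^{BS}(\tau,x)$, which Corollary~\ref{prop_pointwise_est} applied pointwise to $u^{BS}(\tau,\cdot)$ controls by $|\xi|^{2\gamma}\,\|\partial_x u^{BS}(\tau,\cdot)\|_{X^{\gamma-1/2}}$; by analyticity of the semigroup generated by $-A$ acting on the bounded data $\Phi(Ke^{\cdot})-\Phi_{\rm aff}$ (after subtracting its affine asymptote), the latter quantity satisfies a heat-semigroup smoothing bound $\|\partial_x u^{BS}(\tau,\cdot)\|_{X^{\gamma-1/2}}\le C_0\,\tau^{-\kappa}$ with exponent $\kappa=\gamma-\tfrac12+\tfrac{1}{2p}$, and the analogous bound for $\delta\,\partial_x u^{BS}$ follows from boundedness of $\delta$ combined with the same semigroup decay. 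The upper bound $\gamma<(p+1)/(2p)$ is precisely what guarantees $\kappa<\tfrac12<1$, ensuring $\int_0^T\tau^{-\kappa}\,\ud\tau<\infty$. H\"older continuity of $h:(0,T]\to X$ on compact subintervals is immediate from the analyticity of the Black--Scholes semigroup. Proposition~\ref{semilinear_general_existence_result} then delivers a unique mild $U\in C([0,T],X^\gamma)$, and $u=U+u^{BS}$ is the desired mild solution of (\ref{transformedeq}).
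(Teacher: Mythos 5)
Your overall architecture (shift by $u^{BS}$, reduce to $\partial_\tau U+AU=F(\tau,U)+h(\tau)$ with $U(0)=0$, apply Proposition~\ref{semilinear_general_existence_result}) is the same as the paper's, but the step where you estimate $h(\tau)$ contains a genuine gap. You bound $f(u^{BS})$ and $\delta\,\partial_x u^{BS}$ \emph{separately}: the first via Corollary~\ref{prop_pointwise_est} by $\Vert\partial_x u^{BS}(\tau,\cdot)\Vert_{X^{\gamma-1/2}}$, the second by $\Vert\delta\Vert_\infty$ times ``the same semigroup decay''. Both bounds are vacuous, because $\partial_x u^{BS}(\tau,x)=Ke^{x+r\tau}N(d_1)\sim Ke^{x+r\tau}$ as $x\to+\infty$, so $\partial_x u^{BS}(\tau,\cdot)\notin L^p(\R)$ and a fortiori $\Vert\partial_x u^{BS}(\tau,\cdot)\Vert_{X^{\gamma-1/2}}=\infty$; likewise $\Vert\delta\,\partial_x u^{BS}\Vert_{L^p}=\infty$ no matter how bounded $\delta$ is. Your parenthetical fix --- subtract the (affine in $S$, exponential in $x$) asymptote of the payoff before applying the semigroup --- cannot be carried out for $f$ alone, because $f$ does not annihilate the exponential: $f(e^x)=e^x\int_{\R}(e^\xi-1-\xi)\,\nu(\ud z)=\delta\,e^x\neq 0$. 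The cancellation that makes the subtraction legitimate exists only for the \emph{combined} operator $\tilde f(u)=f(u)-\delta\,\partial_x u=\int_{\R}\bigl(u(\cdot+\xi)-u-(e^\xi-1)\partial_x u\bigr)\nu(\ud z)$, which satisfies $\tilde f(e^x)=0$ and hence $\tilde f(u^{BS})=\tilde f(u^{BS}-Ke^{r\tau+x})$, whose argument has derivative $v=Ke^{r\tau+x}(N(d_1)-1)$ decaying as $x\to+\infty$ and lying in $X^{\gamma-1/2}$. This is precisely the device the paper uses (combined with the $L^p$ and H\"older estimates for $v$ imported from \cite[Lemma 4.1]{CruzSevcovic2020}); without it your $h(\tau)$ is not even an element of $X=L^p(\R)$, and Proposition~\ref{semilinear_general_existence_result} cannot be invoked.

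A secondary point: the hypothesis $\gamma<\frac{p+1}{2p}$ is not what makes $\int_0^T\tau^{-\kappa}\,\ud\tau$ finite (the paper's singularity exponent $(2\gamma-1)\left(\frac12-\frac1{2p}\right)$ is automatically less than $1$ for $\gamma<1$); it is needed so that the H\"older exponent $\frac{p+1}{2p}-\gamma$ of the map $\tau\mapsto h(\tau)$ is positive, which is the H\"older-continuity hypothesis on $h$ in Proposition~\ref{semilinear_general_existence_result}. Your exponent $\kappa=\gamma-\frac12+\frac1{2p}$ also does not match the one derived in the paper.
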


\begin{proof}
First, we outline the idea of the proof. The initial condition $u(0,\cdot)\not\in X^\gamma$ because of two reasons. It is not smooth for $x=0$, and it grows exponentially for $x\to\infty$ (call option) or $x\to-\infty$ (put option). The shifted function $U=u-u^{BS}$ satisfies $U(0,\cdot)\equiv 0$, and so the initial condition $U(0,\cdot)$ belongs to $X^\gamma$. On the other hand, the shift function $u^{BS}$  enters the governing PIDE as it includes the term $f(u^{BS}(\tau,\cdot))$ in the right-hand side. Since $u^{BS}(0,x)$ is not sufficiently smooth for $x=0$, the shift term $f(u^{BS}(\tau,\cdot))$ is singular for $\tau\to 0^+$. Following the ideas of \cite{CruzSevcovic2020}, for the shift term $f(u^{BS}(\tau,\cdot))$, we can provide H\"older estimates which are sufficient for proving the main result of this theorem (c.f.,  \cite[Lemma 4.1]{CruzSevcovic2020}). Furthermore, the exponential growth of the function $u^{BS}$ will be overcome since $\tilde f(e^x) = 0$, where $\tilde f(u) = f(u) - \delta(\tau,\cdot) \partial_x u$, i.e., 
\[
\tilde f(u)(x) =
\int_{\mathbb{R}}\left( u(x+\xi)-u(x)- (e^\xi -1) \partial_x u(x)\, \right) \nu(\ud z).
\]

Next, we present more details of the proof. The function $u^{BS}$ solves the linear PDE (\ref{PDE-uBS}). Thus, the difference $U=u-u^{BS}$ of a solution $u$ to (\ref{transformedeq}) and $u^{BS}$ satisfies the PIDE with the right-hand side:
\begin{eqnarray*}
\frac{\partial U}{\partial \tau}
&=&\frac{\sigma^2}{2} \frac{\partial^2 U}{\partial x^2} 
+ \left(r-\frac{\sigma^2}{2} - \delta(\tau,x)\right)\frac{\partial U}{\partial x} + f(U) + f(u^{BS}) - \delta(\tau,x) \frac{\partial u^{BS}}{\partial x}
\\
&=&\frac{\sigma^2}{2} \frac{\partial^2 U}{\partial x^2} + f(U) + g(\tau, x, \partial_x U) + h(\tau,\cdot),
\end{eqnarray*}
$U(0,x)= 0, \ x\in \mathbb{R}, \tau\in(0,T)$. Here $g(\tau, x, \partial_x U) = (r-\sigma^2/2 - \delta(\tau,x) )\partial_x U$, and $h(\tau,\cdot)=\tilde f(u^{BS}(\tau,\cdot))$. According to Proposition~\ref{prop-f}, $f:X^\gamma\to X$ is a bounded linear mapping. Consequently, it is  Lipschitz continuous, provided that $1/2\le \gamma<1$ and $\gamma>(\alpha-1)/(2\omega)$. Clearly, $\tilde f(e^x)=0$. Hence,
\[
\tilde f(u^{BS}) = \tilde f(u^{BS} - K e^{r\tau+x}), \quad \hbox{and}\ \ 
\partial_\tau \tilde f(u^{BS}) = \tilde f(\partial_\tau(u^{BS} - K e^{r\tau+x})).
\]
Now, it follows from \cite[Lemma 4.1]{CruzSevcovic2020} that the following estimate holds true: 
\[
\Vert  h(\tau_1, \cdot) - h(\tau_2, \cdot) \Vert_{L^p} 
= \Vert \tilde f(u^{BS}(\tau_1, \cdot)) - \tilde f(u^{BS}(\tau_2, \cdot)) \Vert_{L^p} 
\le C_0 |\tau_1-\tau_2|^{-\gamma +\frac{p+1}{2p}}, 
\]
\[
\Vert  h(\tau, \cdot)  \Vert_{L^p} 
= \Vert \tilde f(u^{BS})(\tau, \cdot)) \Vert_{L^p} 
\le C_0 |\tau^{-(2\gamma-1)\left(\frac{1}{2} - \frac{1}{2p}\right)}, 
\]
for any $0<\tau_1,\tau_2,\tau\le T$. The function $h:[0,T]\to X\equiv L^p(\mathbb{R})$ is $((p+1)/(2p)-\gamma)$-H\"older continuous because  $\gamma<\frac{p+1}{2p}$. Moreover, 
\[
\int_0^T \Vert h(\tau, \cdot ) \Vert_{L^p} d\tau=
\int_0^T \Vert \tilde f(u^{BS}(\tau, \cdot)) \Vert_{L^p}d\tau 
\le C_0 \int_0^T\tau^{-(2\gamma-1)\left(\frac{1}{2} - \frac{1}{2p}\right)}d\tau <\infty,
\]
because $(2\gamma-1)\left(\frac{1}{2} - \frac{1}{2p}\right)<1$. We recall that the crucial part of the proof of \cite[Lemma 4.1]{CruzSevcovic2020} was based on the estimates:
\[
\Vert \tilde f(u^{BS}(\tau, \cdot)) \Vert_{L^p} 
\le C_0 \Vert v(\tau,\cdot) \Vert_{X^{\gamma-1/2}},
\quad\text{and}\ 
\Vert \partial_\tau \tilde f(u^{BS}(\tau, \cdot)) \Vert_{L^p} 
\le C_0 \Vert \partial_\tau v(\tau,\cdot) \Vert_{X^{\gamma-1/2}}, 
\]
where $v(\tau,x)= \partial_x \left(u^{BS}(\tau,x) - K e^{r\tau+x}\right)
= K e^{r\tau+x}( N(d_1(\tau,x)) -1)$. This estimate is fulfilled because of Proposition~\ref{prop-f} under the assumptions made on $\gamma$. The proof for the case of a put option is similar. The final estimate on the H\"older continuity of the mapping $h$ follows from careful estimates of the solution $u^{BS}$ derived in the proof of \cite[Lemma 4.1]{CruzSevcovic2020}. The proof now follows from Theorem~\ref{semilinear_existence_result} and Proposition~\ref{semilinear_general_existence_result}. 
\end{proof}

\section{Conclusions}
In this paper, we investigated the existence and uniqueness of a solution to the non-local nonlinear partial integro-differential equation (PIDE) arising from financial modeling. We considered a call/put option pricing model on underlying asset that follows a L\'evy process with jumps in the multidimensional space. We employed the theory of abstract semilinear parabolic equation to obtain the existence and uniqueness of a solution to the PIDE in the scale of Bessel potential Sobolev spaces. We generalized existing results for a general L\'evy measures that satisfies some suitable growth conditions. As an application to option pricing in one-dimensional space, we considered Black-Scholes models for pricing call and put options assuming that the asset follows a L\'evy process. We obtained solutions of the governing nonlinear PIDE where the shift function depends on the large investor stock-trading strategy function, which is a solution to a nonlinear algebraic equation.

\vspace{6pt} 

\noindent{\bf Acknowledgments}
The authors gratefully acknowledge the contribution of the Slovak Research and Development Agency under the project APVV-20-0311 (D.\v{S}.). The research was partially supported by the bilateral German-Slovakian  DAAD Project ENANEFA (C.U.).


\begin{thebibliography}{999}

\bibitem{AbelsKass2009}
Abels, H.; Kassmann, M.
\newblock The {C}auchy problem and the martingale problem for
  integro-differential operators with non-smooth kernels.
\newblock {\em Osaka J. Math.} {\bf 2009}, {\em 46},~661--683.

\bibitem{florescu2010solutions}
Florescu, I.; Mariani, M.C.
\newblock Solutions to integro-differential parabolic problems arising in the
  pricing of financial options in a L\'evy market.
\newblock {\em Electronic Journal of Differential Equations (EJDE)[electronic
  only]} {\bf 2010}, {\em 2010},~1--10.

\bibitem{aboodh2016solution}
Aboodh, K.; Farah, R.; Almardy, I.; Almostafa, F.
\newblock Solution of partial integro-differential equations by using Aboodh
  and double Aboodh transforms methods.
\newblock {\em Global Journal of Pure and Applied Mathematics} {\bf 2016}, {\em
  13},~4347--4360.

\bibitem{NBS15}
Arregui, I.; Salvador, B.; \v{S}ev\v{c}ovi\v{c}, D.; V\'{a}zqu\'{e}z, C.
\newblock Mathematical analysis of a nonlinear {PDE} model for {E}uropean
  options with counterparty risk.
\newblock {\em Comptes Rendus Mathematique} {\bf 2019}, {\em 357},~252--257.

\bibitem{yuzbacsi2016improved}
Y{\"u}zba{\c{s}}{\i}, {\c{S}}.
\newblock Improved Bessel collocation method for linear Volterra
  integro-differential equations with piecewise intervals and application of a
  Volterra population model.
\newblock {\em Applied Mathematical Modelling} {\bf 2016}, {\em
  40},~5349--5363.

\bibitem{mikulevivcius1992cauchy}
Mikulevi{\v{c}}ius, R.; Pragarauskas, H.
\newblock On the Cauchy problem for certain integro-differential operators in
  Sobolev and H{\"o}lder spaces.
\newblock {\em Lithuanian Mathematical Journal} {\bf 1992}, {\em 32},~238--264.

\bibitem{ishii1996viscosity}
Ishii, H.
\newblock Viscosity solutions of nonlinear partial differential equations.
\newblock {\em Sugaku Expositions} {\bf 1996}, {\em 9},~135--152.

\bibitem{burzoni2020viscosity}
Burzoni, M.; Ignazio, V.; Reppen, A.M.; Soner, H.M.
\newblock Viscosity solutions for controlled McKean--Vlasov jump-diffusions.
\newblock {\em SIAM Journal on Control and Optimization} {\bf 2020}, {\em
  58},~1676--1699.

\bibitem{barles1997backward}
Barles, G.; Buckdahn, R.; Pardoux, E.
\newblock Backward stochastic differential equations and integral-partial
  differential equations.
\newblock {\em Stochastics: An International Journal of Probability and
  Stochastic Processes} {\bf 1997}, {\em 60},~57--83.

\bibitem{Amster12}
SenGupta, I.; Mariani, M.C.; Amster, P.
\newblock Solutions to integro-differential problems arising on pricing options
  in a {L}\'evy market.
\newblock {\em Acta Appl. Math.} {\bf 2012}, {\em 118},~237--249.

\bibitem{cont2005integro}
Cont, R.; Voltchkova, E.
\newblock Integro-differential equations for option prices in exponential
  L{\'e}vy models.
\newblock {\em Finance and Stochastics} {\bf 2005}, {\em 9},~299--325.

\bibitem{CruzSevcovic2020}
Cruz, J.; \v{S}ev\v{c}ovi\v{c}, D.
\newblock On solutions of a partial integro-differential equation in {B}essel
  potential spaces with applications in option pricing models.
\newblock {\em Jpn. J. Ind. Appl. Math.} {\bf 2020}, {\em 37},~697--721.

\bibitem{NBS19}
Cruz, J.; \v{S}ev\v{c}ovi\v{c}, D.
\newblock Option Pricing in Illiquid Markets with Jumps.
\newblock {\em Applied Mathematical Finance} {\bf 2018}, {\em 25},~389--409.

\bibitem{NBS17}
Arregui, I.; Salvador, B.; \v{S}ev\v{c}ovi\v{c}, D.; V\'{a}zqu\'{e}z, C.
\newblock Total value adjustment for European options with two stochastic
  factors. Mathematical model, analysis and numerical simulation.
\newblock {\em Computers and Mathematics with Applications} {\bf 2018}, {\em
  76},~725--740.

\bibitem{Henry1981}
Henry, D.
\newblock {\em Geometric theory of semilinear parabolic equations}; Vol. 840,
  {\em Lecture Notes in Mathematics}, Springer-Verlag, Berlin-New York,  1981;
  pp. iv+348.

\bibitem{palatucci2017recent}
Palatucci, G.; Kuusi, T.
\newblock {\em Recent developments in nonlocal theory}; De Gruyter Open,  2017.

\bibitem{Merton76}
Merton, R.C.
\newblock {Option pricing when underlying stock returns are discontinuous.}
\newblock {\em J. Financial Economics} {\bf 1976}, {\em 3},~125--144.

\bibitem{Kou2002}
Kou, S.
\newblock {A jump-diffusion model for option pricing}.
\newblock {\em Management Science} {\bf 2002}, {\em 48},~1086--1101.

\bibitem{DPE98}
Madan, D.B.; Carr, P.; Chang, E.C.
\newblock {The Variance Gamma Process and Option Pricing}.
\newblock {\em European Finance Review} {\bf 1998}, {\em 2},~79--105.

\bibitem{BARNIE01}
Barndorff-Nielsen, O.E.; Levendorski{\u\i}, S.Z.
\newblock Feller processes of normal inverse {G}aussian type.
\newblock {\em Quant. Finance} {\bf 2001}, {\em 1},~318--331.

\bibitem{Stein1970}
Stein, E.M.
\newblock {\em Singular integrals and differentiability properties of
  functions}; Princeton Mathematical Series, No. 30, Princeton University
  Press, Princeton, N.J.,  1970; pp. xiv+290.

\bibitem{brezis2010functional}
Brezis, H.
\newblock {\em Functional analysis, Sobolev spaces and partial differential
  equations}; Springer Science \& Business Media,  2010.

\bibitem{SSMbook}
\v{S}ev\v{c}ovi\v{c}, D.; Stehl\'{\i}ková, B.; Mikula, K.
\newblock {\em Analytical and numerical methods for pricing financial
  derivatives}; Nova Science Publishers, Inc., Hauppauge,  2011; pp. 1--309.

\bibitem{leland1985option}
Leland, H.E.
\newblock Option pricing and replication with transactions costs.
\newblock {\em The journal of finance} {\bf 1985}, {\em 40},~1283--1301.

\bibitem{NBS5}
Kwok, Y.K.
\newblock {\em Mathematical models of financial derivatives}, second ed.;
  Springer Finance, Springer, Berlin,  2008; pp. xvi+530.

\bibitem{NBS7}
Avellaneda, M.; Levy, A.; Paras, A.
\newblock Pricing and hedging derivative securities in markets with uncertain
  volatilities.
\newblock {\em Applied Mathematical Finance} {\bf 1995}, {\em 2},~73--88.

\bibitem{sevcoviczitnanska}
{\v{S}ev\v{c}ovi\v{c}}, D.; \v{Z}it\v{n}ansk\'a, M.
\newblock {Analysis of the nonlinear option pricing model under variable
  transaction costs}.
\newblock {\em {Asia-Pac. Financ. Mark.}} {\bf 2016}, {\em 23},~153--174.

\bibitem{NBS13}
Sch\"{o}nbucher, P.J.; Wilmott, P.
\newblock The feedback effect of hedging in illiquid markets.
\newblock {\em SIAM J. Appl. Math.} {\bf 2000}, {\em 61},~232--272.

\bibitem{NBS11}
Frey, R.; Stremme, A.
\newblock Market volatility and feedback effects from dynamic hedging.
\newblock {\em Math. Finance} {\bf 1997}, {\em 7},~351--374.

\bibitem{NBS10}
Frey, R.; Patie, P.
\newblock Risk management for derivatives in illiquid markets: a simulation
  study. In {\em Advances in finance and stochastics}; Springer, Berlin,  2002;
  pp. 137--159.

\bibitem{NBS1}
Janda\v{c}ka, M.; \v{S}ev\v{c}ovi\v{c}, D.
\newblock On the risk-adjusted pricing-methodology-based valuation of vanilla
  options and explanation of the volatility smile.
\newblock {\em J. Appl. Math.} {\bf 2005}, pp. 235--258.

\bibitem{barles}
Barles, G.; Soner, H.M.
\newblock Option pricing with transaction costs and a nonlinear Black-Scholes
  equation.
\newblock {\em Finance and Stochastics} {\bf 1998}.

\bibitem{Frey98}
Frey, R.
\newblock Perfect option hedging for a large trader.
\newblock {\em Finance Stochast} {\bf 1998}, {\em 2},~305--325.

\end{thebibliography}
\end{document}